\def\H{\mathcal{H}}
\def\K{\mathcal{K}}
\def\S{\mathfrak{S}}
\def\C{\mathfrak{C}}
\def\T{\mathfrak{T}}
\def\B{\mathfrak{B}}
\newcommand{\supp}{\mathrm{supp}}
\newcommand{\rank}{\mathrm{rank}}
\newcommand{\Tr}{\mathrm{Tr}}
\newcommand{\ri}{\mathrm{ri}}
\newcounter{defin}  \newcounter{lemma}  \newcounter{theorem}
\newcounter{property} \newcounter{corol}  \newcounter{remark} \newcounter{example}
\newenvironment{lemma}{\par\refstepcounter{lemma}
     \textbf{Lemma \thelemma.} }{\rm\par}
\newenvironment{theorem}{\par\refstepcounter{theorem}
     \textbf{Theorem \thetheorem.}\ }{\rm\par}
\newenvironment{property}{\par\refstepcounter{property}
     \textbf{Proposition \theproperty.}\ }{\rm\par}
\newenvironment{corollary}{\par\refstepcounter{corol}
     \textbf{Corollary \thecorol.} }{\rm\par}
\newenvironment{definition}{\par\refstepcounter{defin}
     \textbf{Definition \thedefin.}\ }{\rm\par}
\newenvironment{remark}{\par\refstepcounter{remark}
     \textbf{Remark \theremark.}}{\rm\par}
\newenvironment{example}{\par\refstepcounter{example}
     \textbf{Example \theexample.}}{\rm\par}
\begin{document}
\title{On superactivation of zero-error capacities and reversibility of a quantum channel}
\author{M.E. Shirokov\footnote{Steklov Mathematical Institute, RAS, Moscow, email:msh@mi.ras.ru}, T.V.
Shulman\footnote{Institute of Mathematics, Polish Academy of
Sciences, Sniadeckich 8, 00-956 Warszawa, Poland, email:
tshulman@impan.pl}}
\date{}
\maketitle

\begin{abstract}
We propose examples of low dimensional quantum channels
demonstrating different forms of superactivation of one-shot
zero-error capacities, in particular, the extreme superactivation
(this complements the recent result of T.S.Cubitt and G.Smith).

We also describe classes of quantum channels whose zero-error
classical and quantum capacities cannot be superactivated.

We consider implications of the superactivation of one-shot
zero-error capacities to analysis of reversibility of a
tensor-product channel with respect to families of pure states.

Our approach based on the notions of complementary channel and of
transitive subspace of operators makes it possible to study the
superactivation effects for infinite-dimensional channels as well.
\end{abstract}
\maketitle


\section{Introduction}

The effect of superactivation of quantum channel capacities is one
of the main recent discoveries in quantum information theory. It
means that the particular capacity of tensor product of two quantum
channels may be positive despite the same capacity of each of these
channels is zero.

This effect was originally observed by G.Smith and J.Yard in
\cite{S&Y}, who gave examples of two channels $\Phi$ and $\Psi$ with
zero quantum capacity such that the channel $\Phi\otimes\Psi$ has
positive quantum capacity.

The same phenomenon for the (one shot and asymptotic) zero-error
classical capacities was established by T.S.Cubitt, J.Chen and W.A.
Harrow in \cite{CCH}. Simultaneously and independently R.Duan
presented an example of low dimensional channels demonstrating
superactivation of the one-shot zero-error classical capacity
\cite{Duan}.

The extreme form of superactivation of zero-error capacities was
observed by T.S.Cubitt and G.Smith in \cite{C&S}, who proved the
existence of two channels $\Phi$ and $\Psi$ with zero (asymptotic)
zero-error classical capacity such that the channel
$\Phi\otimes\Psi$ has positive zero-error quantum capacity.

In this paper we present examples of low dimensional quantum
channels which demonstrate different forms of superactivation of
one-shot zero-error capacities. In particular, in Corollary
\ref{main-c+} we give a \emph{symmetric} example of superactivation
of one-shot zero-error classical capacity with the minimal possible
input dimension $\dim\H_{A} = 4$ and the minimal Choi rank
$\dim\H_{E}=3$ so that $\dim\H_{B}\leq12$ (this answers the question
stated after Theorem 1 in \cite{Duan}). As to the extreme form of
superactivation of one-shot zero-error capacities, the existence of
such channels in high dimensions follows from the results in
\cite{C&S}. However nothing was known about their minimal
dimensions. Here (Corollary \ref{main-c}) we give an explicit
example with $\dim\H_{A}=8$, $\dim\H_{E}=5$ and $\dim\H_{B}\leq40$.

The aim of this paper is also to point out the relation between the
superactivation of one-shot zero-error capacities and results on
transitive and reflexive subspaces of operators \cite{DMR,M&S}. In
fact, the notion of transitive subspace is very close to the notion
of unextendible subspace traditionally used in analysis of the
superactivation (one can easily show that in finite dimensions they
are related by the natural isomorphism between the tensor product
$\H\otimes\K$ of two Hilbert spaces and the space of all operators
from $\H$ to $\K$). Nevertheless, the recent results concerning
transitive subspaces of operators (presented in \cite{DMR}) seem to
be unknown for specialists in quantum information theory. It is also
essential that these results can be used for analysis of
superactivation effects for infinite dimensional quantum channels.

Some results concerning transitive and reflexive subspaces of
operators can also be applied for showing that channels of certain
type cannot be superactivated by any other channels. A result in
this direction was obtained recently by J.Park and S.Lee in
\cite{P&L}. They showed that superactivation of one-shot zero-error
classical capacity is not possible if one of two channels is a qubit
channel. Our approach gives a very simple proof of this result and
also allows us to prove similar statements for some other important
classes of channels (Proposition \ref{non-sa}, Corollary
\ref{non-sa-c-2}). We also describe classes of channels for which
the superactivation of one-shot and asymptotic zero-error quantum
capacities does not hold (Proposition \ref{non-sa+}, Corollary
\ref{non-sa+c-2}).

In this paper we also  consider the relations between positivity of
one-shot  classical and quantum zero-error capacities of a quantum
channel and reversibility properties of this channel with respect to
families of pure states. These relations show that the
superactivation of one-shot  classical (correspondingly, quantum)
zero-error capacities is equivalent to "superactivation" of
reversibility of a channel with respect to orthogonal
(correspondingly, non-orthogonal) families of pure states. It is
observed that such superactivation of reversibility with respect to
\emph{complete} families of pure states is not possible (Proposition
\ref{eq-1}).

\section{On positivity of classical and quantum zero-error capacities of a quantum channel}

Let $\H$ be a separable\footnote{In the main part of the paper we
may assume that these spaces are finite-dimensional, although all
the results are valid in infinite dimensions if we accept the value
$"+\infty"$ for $\bar{C}_0(\Phi)$, $\bar{Q}_0(\Phi)$, etc. The case
of infinite-dimensional quantum channels is included because of our
intension to study reversibility properties of a tensor product
channel (Section 5).} Hilbert space, $\B(\H)$ and $\mathfrak{T}(
\mathcal{H})$ -- the Banach spaces of all bounded operators in
$\mathcal{H}$ and of all trace-class operators in $\H$
correspondingly, $\S(\H)$ -- the closed convex subset of
$\mathfrak{T}( \H)$ consisting of positive operators with unit trace
called \emph{states} \cite{H-SCI,N&Ch}. If $\dim\H=n<+\infty$ we may
identify $\B(\H)$ and $\T(\H)$ with the space $\mathfrak{M}_n$ of
all $n\times n$ matrices (equipped with the appropriate norm).
\smallskip

Let
$\Phi:\mathfrak{T}(\mathcal{H}_A)\rightarrow\mathfrak{T}(\mathcal{H}_B)$
be a quantum channel, i.e. a  completely positive trace-preserving
linear map \cite{H-SCI,N&Ch}. The \textit{dual} channel
$\Phi^{*}:\mathfrak{B}(\H_{B})\rightarrow\mathfrak{B}(\H_{A})$
(defined by the relation $\Tr\Phi(\rho)B=\Tr \rho\,\Phi^{*}(B)$,
$\rho\in\T(\H_{A})$, $B\in\mathfrak{B}(\H_{B})$) is a completely
positive map such that $\Phi^{*}(I_{\H_B})=I_{\H_A}$.

The Stinespring theorem implies the existence of a Hilbert space
$\mathcal{H}_E$ and of an isometry
$V:\mathcal{H}_A\rightarrow\mathcal{H}_B\otimes\mathcal{H}_E$ such
that
\begin{equation}\label{Stinespring-rep}
\Phi(\rho)=\mathrm{Tr}_{\mathcal{H}_E}V\rho V^{*},\quad
\rho\in\mathfrak{T}(\mathcal{H}_A).
\end{equation}
The quantum  channel
\begin{equation}\label{c-channel}
\mathfrak{T}(\mathcal{H}_A)\ni
\rho\mapsto\widehat{\Phi}(\rho)=\mathrm{Tr}_{\mathcal{H}_B}V\rho
V^{*}\in\mathfrak{T}(\mathcal{H}_E)
\end{equation}
is called \emph{complementary} to the channel $\Phi$
\cite{H-SCI,H-c-c}. The complementary channel is defined uniquely up
to isometrical equivalence \cite[the Appendix]{H-c-c}.

The Stinespring representation (\ref{Stinespring-rep}) generates the
Kraus representation
\begin{equation}\label{Kraus-rep}
\Phi(\rho)=\sum_{k=1}^{\dim\H_E}V_{k}\rho V^{*}_{k},\quad
\rho\in\,\T(\H_A),
\end{equation}
in which  $\{V_{k}\}$ is a set of linear operators from $\H_A$ into
$\H_B$ such that $\sum_{k}V^{*}_{k}V_{k}=I_{\H_A}$. The operators
$V_k$ are defined by the relation
$$
\langle\varphi|V_k\psi\rangle=\langle\varphi\otimes
k|V\psi\rangle,\quad\varphi\in\H_B,\psi\in\H_A,
$$
where  $\{|k\rangle\}$ is an orthonormal basis in the space $\H_E$.
The complementary channel (\ref{c-channel}) can be expressed via
these operators as follows
\begin{equation}\label{Kraus-rep-c}
\widehat{\Phi}(\rho)=\sum_{k,l=1}^{\dim\H_E}\Tr\left[V_{k}AV_{l}^{*}\right]|k\rangle\langle
l|,\quad \rho\in\,\T(\H_A).
\end{equation}
Among different Stinespring representations (\ref{Stinespring-rep})
of a given channel $\Phi$ there are representations with the
environment space $\H_E$ of minimal dimension (such representations
are called minimal \cite{H-c-c}). They generates Kraus
representations (\ref{Kraus-rep}) with the minimal number of nonzero
summands called \emph{Choi rank} of the channel $\Phi$
\cite{H-SCI,N&Ch}. We assume in what follows that
(\ref{Stinespring-rep}) is a minimal Stinespring representation, so
that $\dim\H_E$ is the Choi rank of $\Phi$.\smallskip

The one-shot zero-error classical capacity $\bar{C}_0(\Phi)$ of a
channel $\Phi$ can be defined as $\;\sup_{\S\in
c_0(\Phi)}\log\sharp(\S)\,$, where $c_0(\Phi)$ is the set of all
families $\{\rho_i\}$ of input states such that
$\,\supp\Phi(\rho_i)\perp\supp\Phi(\rho_j)\,$ for all $i\neq
j$.\footnote{The support $\mathrm{supp}\rho$ of a state $\rho$ is
the orthogonal complement to its kernel.} The (asymptotic)
zero-error classical capacity is defined by regularization:
$C_0(\Phi)=\sup_n n^{-1}\bar{C}_0(\Phi^{\otimes n})$ \cite{CCH, C&S,
Duan, W&Co, ZEC, P&L}. \smallskip

Let $\varphi,\psi\in\H_A$. It follows from (\ref{Stinespring-rep}), (\ref{c-channel})
and the Schmidt decomposition of the vectors $V\varphi$ and $V\psi$ in
$\mathcal{H}_B\otimes\mathcal{H}_E$ that
\begin{equation}\label{eq-1}
\supp\Phi(|\varphi\rangle\langle
\varphi|)\perp\supp\Phi(|\psi\rangle\langle\psi|)\quad\Leftrightarrow\quad\widehat{\Phi}(|\varphi\rangle\langle
\psi|)=0.
\end{equation}

This observation implies the following lemma.
 \smallskip
\begin{lemma}\label{rank1}
\emph{A channel $\,\Phi:\T(\H_A)\rightarrow\T(\H_B)$ has positive
one-shot zero-error classical capacity if and only if
$\,\ker\widehat{\Phi}$ contains a 1-rank operator.}
\end{lemma}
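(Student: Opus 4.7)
The plan is to exploit equation (\ref{eq-1}), which converts orthogonality of outputs on pure inputs into the vanishing of $\widehat{\Phi}$ on the associated rank-one operator. Both implications should follow by an essentially direct translation, with only a small check needed to guarantee that the rank-one operators involved really are rank one (i.e.\ that neither factor is zero and that the two factors are linearly independent).

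For the ``if'' direction, suppose $\widehat{\Phi}(T)=0$ for some rank-one operator $T=|\varphi\rangle\langle\psi|$ with $\varphi,\psi\neq 0$. First I would observe that $\varphi$ and $\psi$ must be linearly independent: otherwise $T$ is a nonzero scalar multiple of some $|\chi\rangle\langle\chi|$, and since $\widehat{\Phi}$ is trace-preserving we would get $\Tr\widehat{\Phi}(|\chi\rangle\langle\chi|)=1$, contradicting $\widehat{\Phi}(T)=0$. Then applying (\ref{eq-1}) directly yields $\supp\Phi(|\varphi\rangle\langle\varphi|)\perp\supp\Phi(|\psi\rangle\langle\psi|)$, so the family $\{|\varphi\rangle\langle\varphi|/\|\varphi\|^2,\,|\psi\rangle\langle\psi|/\|\psi\|^2\}$ lies in $c_0(\Phi)$ and has cardinality two, giving $\bar{C}_0(\Phi)\geq\log 2>0$.

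For the converse, suppose $\bar{C}_0(\Phi)>0$. Then $c_0(\Phi)$ contains at least two states $\rho_1,\rho_2$ with $\supp\Phi(\rho_1)\perp\supp\Phi(\rho_2)$. Writing each as a convex combination of pure states, $\rho_i=\sum_k p^{(i)}_k|\varphi^{(i)}_k\rangle\langle\varphi^{(i)}_k|$, the orthogonality of the supports of $\Phi(\rho_1)$ and $\Phi(\rho_2)$ forces $\supp\Phi(|\varphi^{(1)}_k\rangle\langle\varphi^{(1)}_k|)\perp\supp\Phi(|\varphi^{(2)}_j\rangle\langle\varphi^{(2)}_j|)$ for every pair $k,j$ with $p^{(1)}_k,p^{(2)}_j>0$. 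Fixing one such pair and applying (\ref{eq-1}) produces $\widehat{\Phi}(|\varphi^{(1)}_k\rangle\langle\varphi^{(2)}_j|)=0$; since $\varphi^{(1)}_k$ and $\varphi^{(2)}_j$ are drawn from supports whose images are orthogonal they are in particular nonzero and linearly independent, so this is a genuine rank-one element of $\ker\widehat{\Phi}$. I do not anticipate any serious obstacle: the lemma is essentially a repackaging of (\ref{eq-1}), and the only bookkeeping is the linear-independence/nonzero verification in each direction.
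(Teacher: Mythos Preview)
Your proposal is correct and follows exactly the route the paper indicates: the paper states that Lemma~\ref{rank1} is an immediate consequence of equivalence~(\ref{eq-1}) without spelling out details, and your argument is precisely the natural unpacking of that implication, including the small bookkeeping about linear independence that makes the two-element family genuine.
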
\smallskip

The assertion of Lemma \ref{rank1} agrees with Lemma 1 in
\cite{Duan}, since representation (\ref{Kraus-rep-c}) shows that the
subspace $\,\widehat{\Phi}^*(\B(\H_E))$ is precisely the
\emph{noncommutative graph} $\mathcal{G}(\Phi)$ of the channel
$\Phi$ which is defined as the subspace of $\B(\H_A)$ spanned by the
family of operators $\{V^*_j V_k\}_{kj}$, where $\{V_k\}_{k}$ is a
family of operators from the Kraus representation (\ref{Kraus-rep})
of the channel $\Phi$ \cite[Lemma 1]{W&Co}.\smallskip

\begin{definition}\label{trans-d}\cite{DMR} A subspace
$\mathfrak{L}\subseteq\B(\H)$ is (topologically) \emph{transitive}
if for any vector $\varphi\in\H$ the set
$\mathfrak{L}|\varphi\rangle\doteq\{A|\varphi\rangle\,|\,A\in\mathfrak{L}\}\,$
is dense in $\H$.
\end{definition}\smallskip

If $\,\H$ is a finite-dimensional space then "is dense in" in the
above definition may be replaced by "coincides with".
\smallskip

The following lemma is our basic tool for studying the one-shot zero-error classical capacity.

\begin{lemma}\label{trans-l}
\emph{A channel $\Phi:\T(\H_A)\rightarrow\T(\H_B)$ has positive
one-shot zero-error classical capacity if and only if the
noncommutative graph
$\,\mathcal{G}(\Phi)\doteq\widehat{\Phi}^*(\B(\H_E))$ is not
transitive.}
\end{lemma}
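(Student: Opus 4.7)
The plan is to derive the statement directly from Lemma \ref{rank1} via a standard Hilbert--Schmidt duality argument. By that lemma it suffices to show that $\ker\widehat{\Phi}$ contains a nonzero rank-one operator if and only if $\mathcal{G}(\Phi)=\widehat{\Phi}^{*}(\B(\H_E))$ fails to be transitive in the sense of Definition \ref{trans-d}.

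First I would identify $\ker\widehat{\Phi}$ with the annihilator of $\mathcal{G}(\Phi)$ inside $\T(\H_A)$ under the trace pairing between $\T(\H_A)$ and $\B(\H_A)$. Indeed, the defining relation $\Tr\widehat{\Phi}(\rho)B=\Tr\rho\,\widehat{\Phi}^{*}(B)$ for $\rho\in\T(\H_A)$ and $B\in\B(\H_E)$ shows that $\widehat{\Phi}(\rho)=0$ if and only if $\Tr\rho\,A=0$ for every $A\in\mathcal{G}(\Phi)$, since $\widehat{\Phi}(\rho)\in\T(\H_E)$ vanishes precisely when it pairs trivially with every bounded operator on $\H_E$.

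Next I would specialize this characterization to rank-one elements. A rank-one operator $|\varphi\rangle\langle\psi|\in\T(\H_A)$ annihilates $\mathcal{G}(\Phi)$ exactly when $\langle\psi|A\varphi\rangle=\Tr(A|\varphi\rangle\langle\psi|)=0$ for every $A\in\mathcal{G}(\Phi)$, i.e.\ when $\mathcal{G}(\Phi)|\varphi\rangle\subseteq\{\psi\}^{\perp}$. Hence a nonzero rank-one element of $\ker\widehat{\Phi}$ exists if and only if there is a nonzero vector $\varphi\in\H_A$ for which $\mathcal{G}(\Phi)|\varphi\rangle$ is not dense in $\H_A$, which is precisely the negation of transitivity of $\mathcal{G}(\Phi)$. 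Combining this equivalence with Lemma \ref{rank1} yields the claim.

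I do not expect a real obstacle here: the argument is essentially a duality translation. The only point requiring mild care is the infinite-dimensional setting, where one has to use the trace pairing rather than a genuine Hilbert--Schmidt inner product and replace "equal to $\H_A$" by "dense in $\H_A$"; both refinements are automatic from the definitions and from the closedness of the annihilator, so no additional work is needed.
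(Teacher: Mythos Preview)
Your proposal is correct and follows essentially the same approach as the paper: both reduce to Lemma~\ref{rank1} via the duality $\ker\widehat{\Phi}=\mathcal{G}(\Phi)^{\perp}$ and the observation that a subspace is transitive if and only if its trace-annihilator contains no rank-one operator. The only difference is presentational: the paper cites this last equivalence (attributing it to Azoff and \cite{DMR}), whereas you spell out the short argument directly.
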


\begin{proof} It is easy to check that a subspace $\mathfrak{L}$ of $\B(\H)$  is
transitive if and only if the subspace
$\mathfrak{L}^{\perp}=\{A\in\T(\H)\,|\,\Tr AB=0\;\forall
B\in\mathfrak{L}\}$ does not contain any 1-rank operator (this was
first noticed by Azoff \cite{Azoff}, see also \cite{DMR}). Now the
statement follows from Lemma \ref{rank1}.
\end{proof}

The one-shot zero-error quantum capacity $\bar{Q}_0(\Phi)$ of a
channel $\Phi$ can be defined as $\;\sup_{\H\in
q_0(\Phi)}\log\dim\H\,$, where $q_0(\Phi)$ is the set of all
subspaces $\H_0$ of $\H_A$ on which the channel $\Phi$ is perfectly
reversible (in the sense that there is a channel $\Psi$ such that
$\Psi(\Phi(\rho))=\rho$ for all states $\rho$ supported by $\H_0$,
see \cite[Ch.10]{H-SCI}). The (asymptotic) zero-error quantum
capacity is defined by regularization: $Q_0(\Phi)=\sup_n
n^{-1}\bar{Q}_0(\Phi^{\otimes n})$ \cite{CCH, C&S, Duan, W&Co, ZEC,
P&L}.

Hence the one-shot zero-error quantum capacity $\bar{Q}_0(\Phi)$ is
positive if and only if there exists a nontrivial subspace $\H_0$ of
$\H_A$ such that the restriction of the channel $\widehat{\Phi}$ to
the subset $\S(\H_0)$ is completely depolarizing
\cite[Ch.10]{H-SCI}, i.e.
$\widehat{\Phi}(\rho_1)=\widehat{\Phi}(\rho_2)$ for all states
$\rho_1$ and $\rho_2$ supported by $\H_0$.\smallskip

These arguments imply the following modification of Lemma 1 in
\cite{C&S}.\smallskip

\begin{lemma}\label{trans-l+}
\emph{A channel $\Phi:\T(\H_A)\rightarrow\T(\H_B)$ has positive
one-shot zero-error quantum  capacity if and only if there are unit
vectors $\varphi$ and $\psi$ in $\H_A$ such that
\begin{equation}\label{operators}
\widehat{\Phi}(|\varphi\rangle\langle\psi|)=0\quad\textit{and}\quad
\widehat{\Phi}(|\varphi\rangle\langle\varphi|)=\widehat{\Phi}(|\psi\rangle\langle\psi|)
\end{equation}
or, equivalently,
\begin{equation}\label{operators+}
\langle \psi|A|\varphi\rangle=0\quad\textit{and}\quad \langle
\varphi|A|\varphi\rangle=\langle \psi|A|\psi\rangle\quad\forall
A\in\mathcal{G}(\Phi)\doteq\widehat{\Phi}^*(\B(\H_E)).
\end{equation}}
\end{lemma}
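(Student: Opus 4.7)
The plan is to derive this lemma directly from the characterisation stated just above it: $\bar{Q}_0(\Phi)>0$ if and only if there exists a nontrivial subspace $\H_0\subseteq\H_A$ on which $\widehat{\Phi}$ acts as a constant map on states, which we may take two-dimensional. So I would prove the two implications between existence of $\varphi,\psi$ satisfying (\ref{operators}) and existence of such an $\H_0$.

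For the direction $(\ref{operators})\Rightarrow\bar{Q}_0(\Phi)>0$: first I would observe that since $\widehat{\Phi}$ is trace preserving, the equation $\widehat{\Phi}(|\varphi\rangle\langle\psi|)=0$ forces $\langle\psi|\varphi\rangle=\Tr|\varphi\rangle\langle\psi|=0$, so the unit vectors $\varphi,\psi$ are automatically orthogonal. Set $\H_0=\mathrm{span}\{\varphi,\psi\}$. For an arbitrary unit vector $|\alpha\rangle=a|\varphi\rangle+b|\psi\rangle\in\H_0$ (with $|a|^2+|b|^2=1$) I would expand
\[
\widehat{\Phi}(|\alpha\rangle\langle\alpha|)=|a|^2\widehat{\Phi}(|\varphi\rangle\langle\varphi|)+|b|^2\widehat{\Phi}(|\psi\rangle\langle\psi|)+a\bar{b}\widehat{\Phi}(|\varphi\rangle\langle\psi|)+\bar{a}b\widehat{\Phi}(|\psi\rangle\langle\varphi|),
\]
and note that the two cross terms vanish by (\ref{operators}) and its adjoint, while the two diagonal terms coincide, giving a value independent of $\alpha$. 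Extending by convex combination, $\widehat{\Phi}$ is completely depolarising on $\S(\H_0)$, so $\bar{Q}_0(\Phi)\geq\log 2>0$.

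For the reverse direction I would pick any orthonormal basis $\{\varphi,\psi\}$ of a two-dimensional depolarising subspace $\H_0$. The second equation in (\ref{operators}) is immediate. For the first, I would apply a polarisation-identity argument: the four unit vectors $\tfrac{1}{\sqrt 2}(\varphi\pm\psi)$ and $\tfrac{1}{\sqrt 2}(\varphi\pm i\psi)$ all lie in $\H_0$, so $\widehat{\Phi}$ sends all four pure-state projectors to the same output. Subtracting the outputs from the $\pm$ pair yields $\widehat{\Phi}(|\varphi\rangle\langle\psi|+|\psi\rangle\langle\varphi|)=0$, and subtracting the outputs from the $\pm i$ pair yields $\widehat{\Phi}(|\varphi\rangle\langle\psi|-|\psi\rangle\langle\varphi|)=0$; adding these gives $\widehat{\Phi}(|\varphi\rangle\langle\psi|)=0$.

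The equivalence between (\ref{operators}) and (\ref{operators+}) is a routine duality computation: for any trace-class operator $X$, $\widehat{\Phi}(X)=0$ iff $\Tr X\,\widehat{\Phi}^{*}(B)=0$ for every $B\in\B(\H_E)$, i.e.\ iff $\Tr XA=0$ for every $A\in\mathcal{G}(\Phi)$. Applied to $X=|\varphi\rangle\langle\psi|$ this gives $\langle\psi|A|\varphi\rangle=0$, and applied to $X=|\varphi\rangle\langle\varphi|-|\psi\rangle\langle\psi|$ it gives $\langle\varphi|A|\varphi\rangle=\langle\psi|A|\psi\rangle$, both for all $A\in\mathcal{G}(\Phi)$. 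The only mildly subtle step is the polarisation computation in the reverse direction; otherwise everything follows from trace preservation of $\widehat{\Phi}$ and the standing characterisation of $\bar{Q}_0(\Phi)>0$ in terms of constant pure-state images of $\widehat{\Phi}$.
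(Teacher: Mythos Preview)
Your proof is correct and follows essentially the same route as the paper's: both arguments hinge on the observation that, on the span of $\varphi$ and $\psi$, the condition $\widehat{\Phi}(|\varphi\rangle\langle\psi|)=0$ is equivalent to $\widehat{\Phi}$ being ``diagonal'' in that basis, so that adding the equal-diagonal condition makes the restriction completely depolarising. You spell out explicitly several points the paper leaves implicit (the orthogonality of $\varphi,\psi$ via trace preservation, the polarisation argument for the reverse implication, and the duality computation giving the equivalence of (\ref{operators}) and (\ref{operators+})), but the core reasoning is the same.
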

\begin{proof} It is easy to see that
$\,\widehat{\Phi}(|\varphi\rangle\langle\psi|)=0\,$ if and only if
$$
\widehat{\Phi}(\rho)=\langle\varphi|\rho|\varphi\rangle\widehat{\Phi}(|\varphi\rangle\langle\varphi|)+\langle\psi|\rho|\psi\rangle
\widehat{\Phi}(|\psi\rangle\langle\psi|)
$$
for all states $\rho$ supported by the subspace $\H_{\varphi,\psi}$
spanned by the vectors $\varphi$ and $\psi$. Hence (\ref{operators})
holds if and only if the restriction of the channel $\widehat{\Phi}$
to the subset $\S(\H_{\varphi,\psi})$ is completely depolarizing.
\end{proof}

Lemmas \ref{trans-l} and \ref{trans-l+} imply the following
conditions for positivity of the one-shot  classical and quantum
zero-error capacities.\smallskip

\begin{property}\label{com}
\emph{Let $\,\Phi:\T(\H_A)\rightarrow\T(\H_B)$ be a quantum channel
and $\,\mathcal{G}(\Phi)\doteq\widehat{\Phi}^*(\B(\H_E))$ its
noncommutative graph. Then}
\begin{equation}\label{imp-1}
[\mathcal{G}(\Phi)]'\;\,\textit{is non-trivial}\; (\neq\{\lambda
I\}) \quad\Rightarrow\quad \bar{C}_0(\Phi)>0,
\end{equation}
\begin{equation}\label{imp-2}
[\mathcal{G}(\Phi)]' \;\,\textit{is
noncommutative}\quad\quad\;\;\,\Rightarrow \quad \bar{Q}_0(\Phi)>0.
\end{equation}

\emph{If $\,\mathcal{G}(\Phi)$ is an algebra then
$"\Leftrightarrow"$ holds in the above implications.}
\end{property}\medskip

\begin{remark}\label{com-r}
In general $"\Leftrightarrow"$ does not hold in (\ref{imp-1}) and
(\ref{imp-2}). There exists a channel $\Phi$ with
$\bar{Q}_0(\Phi)>0$ for which $[\mathcal{G}(\Phi)]'=\{\lambda I\}$.
Indeed, since the subspace of $\mathfrak{M}_4$ consisting of the
matrices
$$
\left[\begin{array}{cc}\lambda I_2& A\\B&C\end{array}\right],\quad
A,B,C\in \mathfrak{M}_2,
$$
is symmetric and contains the unit matrix $I_4$, Proposition
\ref{cmp} below (or Lemma 2 in \cite{Duan}) shows that this subspace
is the noncommutative graph of some channel $\Phi$. It follows from
Lemma \ref{trans-l+} that $\bar{Q}_0(\Phi)>0$, but it is easy to see
that the commutant of this subspace is trivial.
\end{remark}

\begin{proof} If the algebra $[\mathcal{G}(\Phi)]'$ is non-trivial, then it contains a non-trivial
projection $P$. Then $\mathcal{G}(\Phi) P(\H_A) \subseteq P(\H_A)$
and hence $\mathcal{G}(\Phi)$ is not transitive.
The first implication follows now from Lemma
\ref{trans-l}.

If the algebra
$[\mathcal{G}(\Phi)]'$ is noncommutative, then,  by Lemma \ref{e-p} below,  there exists
a partial isometry $W\in[\mathcal{G}(\Phi)]'$ such that the
projections $P=W^*W$ and $Q=WW^*$ are orthogonal. Let
$|\varphi\rangle$ be an arbitrary vector in $P(\H_A)$ and
$|\psi\rangle=W|\varphi\rangle\in Q(\H_A)$. Then it is easy to see
that (\ref{operators+}) holds and
by Lemma \ref{trans-l+} the second implication follows.

By Lemma \ref{trans-l}  $\bar{C}_0(\Phi)>0$ implies the existence of
a non-zero vector $\varphi$ such that $\H_{\varphi}=
\overline{\{A|\varphi\rangle,\, A\in\mathcal{G}(\Phi)\}}\neq\H_A$.
If $\,\mathcal{G}(\Phi)$ is an algebra then $\H_{\varphi}$ is an
invariant subspace   for $\mathcal{G}(\Phi)$. Since the algebra
$\mathcal{G}(\Phi)$ is symmetric, it implies that the orthogonal
projection onto $\H_{\varphi}$ commutes with $\mathcal{G}(\Phi)$.
\smallskip

Suppose $\,\mathcal{G}(\Phi)$ is an algebra and
$\,\bar{Q}_0(\Phi)>0\,$. We will show that $[\mathcal{G}(\Phi)]'$
contains two orthogonal equivalent projections and hence is
noncommutative. By Lemma \ref{trans-l+} there are vectors  $\varphi$
and $\psi$ in $\H_A$ such that (\ref{operators+}) holds. Let
$\H_{\varphi} = \overline{\{A|\varphi\rangle\; | \; A \in
\mathcal{G}(\Phi)\}}$ and $\H_{\psi} = \overline{\{A|\psi\rangle\; |
\; A \in \mathcal{G}(\Phi)\}}$. It follows from (\ref{operators+})
that $\H_{\varphi}\perp\H_{\psi}$ and that
$\|A|\varphi\rangle\|=\|A|\psi\rangle\|$ for all
$A\in\mathcal{G}(\Phi)$. Hence the operator  $W$ defined by the
relations
$$
WA|\varphi\rangle=A|\psi\rangle\quad\forall
A\in\mathcal{G}(\Phi)\quad\textup{and}\quad
W|\phi\rangle=0\quad\forall \phi\in\H^{\perp}_{\varphi}
$$
is a partial isometry for which $\H_{\varphi}$ and $\H_{\psi}$ are
initial and final subspaces. Since these subspaces are invariant for
all operators in $\mathcal{G}(\Phi)$, it is easy to see that
$W\in[\mathcal{G}(\Phi)]'$. Thus, the algebra $[\mathcal{G}(\Phi)]'$
contains the orthogonal equivalent projections $W^*W$ and $WW^*$
(onto $\H_{\varphi}$ and $\H_{\psi}$ respectively).
\end{proof}

\begin{lemma}\label{e-p}\footnote{We
are grateful to V.S.Shulman for this observation.} \emph{A von
Neumann algebra $\mathfrak{M}$ is noncommutative if and only if it
contains two orthogonal equivalent projections.}\footnote{Two
projections $P$ and $Q$ are said to be equivalent relative to a von
Neumann algebra $\mathfrak{M}$ when $P=W^*W$ and $Q=WW^*$ for some
$W\in\mathfrak{M}$ \cite[Definition 6.1.4]{K&R}.}
\end{lemma}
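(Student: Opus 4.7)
The plan is to treat the two directions of the equivalence separately. The reverse implication is short: if $P=W^*W$ and $Q=WW^*$ are nonzero projections in $\mathfrak{M}$ with $PQ=0$, then $W,W^*\in\mathfrak{M}$ satisfy $W^*W = P \ne Q = WW^*$ (since $P,Q$ are nonzero and orthogonal, they cannot coincide), so $\mathfrak{M}$ is noncommutative.

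For the forward direction, suppose $\mathfrak{M}$ is noncommutative. The strategy is to produce the required pair of projections in three steps. First, I locate a \emph{non-central} projection $P\in\mathfrak{M}$. This step uses the spectral theorem inside a von Neumann algebra: if every projection in $\mathfrak{M}$ were central, then by spectral decomposition every self-adjoint element of $\mathfrak{M}$ would be central, and since $\mathfrak{M}$ is spanned by its self-adjoint elements this would force $\mathfrak{M}=Z(\mathfrak{M})$, contradicting noncommutativity.

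Second, from non-centrality of $P$ I extract a nonzero off-diagonal element: $P^\perp\mathfrak{M}P\neq\{0\}$. Indeed, if $P^\perp A P = 0$ for every $A\in\mathfrak{M}$, then taking adjoints also $P A P^\perp = 0$, and together these give $AP = PAP = PA$ for all $A\in\mathfrak{M}$, making $P$ central. Third, pick a nonzero $A\in P^\perp\mathfrak{M}P$ and form its polar decomposition $A=V|A|$. Because $\mathfrak{M}$ is weak-operator closed, the partial isometry $V$ belongs to $\mathfrak{M}$ (standard consequence of the bicommutant theorem: every unitary in $\mathfrak{M}'$ commutes with $A$ and $|A|$, hence with $V$). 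Now $V^*V$ is the support projection of $|A|$, which is dominated by $P$ since $A=AP$; and $VV^*$ is the range projection of $A$, dominated by $P^\perp$ since $A=P^\perp A$. Thus $V^*V$ and $VV^*$ are orthogonal, nonzero (as $A\neq 0$), equivalent projections in $\mathfrak{M}$.

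The main obstacle is the first step: turning the purely algebraic hypothesis of noncommutativity into the existence of a non-central projection. Once that is done, the off-diagonal-corner and polar-decomposition arguments are essentially bookkeeping, relying only on the fact that von Neumann algebras are stable under polar decomposition.
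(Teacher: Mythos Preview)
Your proof is correct. Both directions are handled cleanly; in particular the polar-decomposition step is sound because $A=P^\perp A P$ forces the initial projection $V^*V$ under $P$ and the final projection $VV^*$ under $P^\perp$.

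Your forward direction differs genuinely from the paper's. After isolating a non-central projection $P$ (a step the paper simply asserts, while you justify it via the spectral theorem), the paper appeals to the Comparison Theorem \cite[Theorem~6.2.7]{K&R}: it compares $P$ and $I-P$ via a central projection $E$ and extracts a nonzero subprojection of $PE$ (or of $(I-P)(I-E)$) equivalent to a subprojection of its complement. You instead observe that non-centrality gives a nonzero corner $P^\perp\mathfrak{M}P$, pick any nonzero $A$ there, and read off the desired partial isometry from the polar decomposition $A=V|A|$. Your route is more elementary---it uses only that polar parts lie in the von Neumann algebra, avoiding the projection-ordering machinery---and it is constructive: the equivalent orthogonal projections are explicitly the initial and final spaces of a single off-diagonal element. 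The paper's approach, on the other hand, stays entirely within the lattice of projections and would generalize more readily to settings where one cares about comparison classes rather than specific operators.
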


\begin{proof} If $\mathfrak{M}$ is noncommutative then it contains a
noncentral projection $P$. Let $\bar{P}=I-P$. By the Comparison
Theorem \cite[Theorem 6.2.7.]{K&R} there exists a central projection
$E$ such that $PE\preceq\bar{P}E$ and $\bar{P}\bar{E}\preceq
P\bar{E}$, where $\bar{E}=I-E$ and $"\preceq"$ denotes the
projection ordering (relative to $\mathfrak{M}$) \cite{K&R}. Since
$P$ is noncentral, either $PE\neq0$ or $\bar{P}\bar{E}\neq0$
(otherwise $P=\bar{E}$).

If $PE\neq0$ then $PE$ is equivalent to some projection
$Q\leq\bar{P}E$. It is clear that the projections $PE$ and $Q$ are
orthogonal.

If $\bar{P}\bar{E}\neq0$ then the similar arguments shows the
existence of a projection $Q'\leq P\bar{E}$ equivalent to
$\bar{P}\bar{E}$.
\end{proof}

\begin{example}\label{gauss} An important class of channels
for which $"\Leftrightarrow"$ hods in (\ref{imp-1}) and in
(\ref{imp-2}) consists of Bosonic Gaussian channels defined as
follows.

Let $\mathcal{H}_{X}$ $(X=A,B)$ be the space of irreducible
representation of the Canonical Commutation Relations (CCR)
\begin{equation*}
W_X(z)W_X(z^{\prime })=\exp
\left(-\textstyle{\frac{\mathrm{i}}{2}}\,\Delta_{X}(z,z^{\prime})\right)
W_X(z^{\prime }+z),\quad z,z'\in Z_X,
\end{equation*}
where  $(Z_{X},\Delta _{X})$ is a symplectic space and $W_{X}(z)$
are the Weyl operators \cite{Caruso,E&W},\cite[Ch.12]{H-SCI}. Denote
by $s_X$ the number of modes of the system $X$, i.e. $2s_X=\dim
Z_X$. A Bosonic Gaussian channel  $\Phi_{K,l,\alpha}
:\mathfrak{T}(\mathcal{H}_{A})\rightarrow
\mathfrak{T}(\mathcal{H}_{B})$ is defined via the action of its dual
$\Phi_{K,l,\alpha}^{\ast }:\mathfrak{B}(\mathcal{H}_{B})\rightarrow
\mathfrak{B}(\mathcal{H}_{A})$ on the Weyl operators:
$$
\Phi_{K,l,\alpha}^{\ast}(W_{B}(z))=W_A(Kz)\exp \left[\,
\mathrm{i}\hspace{1pt}l\hspace{1pt}z-\textstyle\frac{1}{2}\hspace{1pt}z^{\top
}\alpha \hspace{1pt}z\,\right],\quad z\in Z_B,
$$
where $K:Z_{B}\rightarrow Z_{A}$ is a linear operator, $l\,$ is a
$\,2s_B$-dimensional real row and $\,\alpha\,$ is a real symmetric
$\,(2s_B)\times(2s_B)$ matrix satisfying the inequality $\alpha \geq
\pm \frac{\mathrm{i}}{2}\left[ \Delta _{B}-K^{\top }\Delta
_{A}K\right]$ \cite{Caruso,E&W,H-SCI}.\smallskip

Any Bosonic Gaussian channel $\Phi_{K,l,\alpha}$ is unitary
equivalent to the channel $\Phi_{K,0,\alpha}$ for which Bosonic
unitary dilation always exists \cite{Caruso,H-SCI}. So, Lemma 2 in
\cite{BRC} shows that the noncommutative graph of the channel
$\Phi_{K,0,\alpha}$ coincides with the algebra generated by the
family $\{W_A(z)\}_{z\in [K(\ker\alpha)]^{\mathrm{c}}}$ of Weyl
operators in $\H_A$, where $[K(\ker\alpha)]^{\mathrm{c}}$ is the
skew-orthogonal complement to the subspace $K(\ker\alpha)\subseteq
Z_A$. It follows that
$[\mathcal{G}(\Phi_{K,0,\alpha})]'=\left[\{W_A(z)\}_{z\in
K(\ker\alpha)}\right]''$.\smallskip

Since $\ker K\cap\ker\alpha=\{0\}$ and
$\Delta_A(Kz_1,Kz_2)=\Delta_B(z_1,z_2)$ for all $z_1,z_2 $ in
$\ker\alpha$ (see \cite[Ch.12]{H-SCI} or \cite[Lemma 2]{BRC}), the
algebra $\left[\{W_A(z)\}_{z\in K(\ker\alpha)}\right]''$ is
nontrivial if and only if $\ker\alpha\neq \{0\}$ and it is
noncommutative if and only if $\Delta_B|_{\ker\alpha}\neq0$. Thus,
Proposition \ref{com} shows that
\begin{equation}\label{bgc-imp}
\begin{array}{l}
\{\,\bar{C}_0(\Phi_{K,l,\alpha})>0\,\}\,\Leftrightarrow\,\{\,\ker\alpha\neq \{0\}\,\},\\
\{\,\bar{Q}_0(\Phi_{K,l,\alpha})>0\,\}\,\Leftrightarrow\,\{\,\exists\;z_1,z_2\in
\ker\alpha\,\;\textup{such that}\;\, \Delta_B(z_1,z_2)\neq0\,\}.
\end{array}
\end{equation}
In fact, positivity of these capacities means that they are equal to
$+\infty$.\footnote{This follows from the observations in
\cite[Section 4C]{BRC}.}

Since the tensor product of two Gaussian channels
$\Phi_{K_1,l_1,\alpha_1}$ and $\Phi_{K_2,l_2,\alpha_2}$ is a
Gaussian channel $\Phi_{K,l,\alpha}$ with
$\alpha=\alpha_1\oplus\alpha_2$, it is easy to see that equivalence
relations (\ref{bgc-imp}) are valid for the asymptotic zero-error
capacities as well, i.e. for $C_0(\Phi_{K,l,\alpha})$ and
$Q_0(\Phi_{K,l,\alpha})$ instead of $\bar{C}_0(\Phi_{K,l,\alpha})$
and $\bar{Q}_0(\Phi_{K,l,\alpha})$.
\end{example}

\section{Superactivation of one-shot zero-error capacities}

\subsection{The case of zero-error classical capacities}

The superactivation of one-shot zero-error classical capacity means
that
\begin{equation}\label{sa-cc}
    \bar{C}_0(\Phi_1)=\bar{C}_0(\Phi_2)=0,\quad\textrm{but}\quad
    \bar{C}_0(\Phi_1\otimes\Phi_2)>0.
\end{equation}
for some channels $\Phi_1$ and $\Phi_2$. The existence of such
channels was shown independently in \cite{CCH,Duan}. In particular,
in \cite{Duan} an example of two channels\break $\Phi_1\neq\Phi_2$
having input dimension $\dim\H_A=4$ such that (\ref{sa-cc}) holds
was constructed and it was mentioned that this is the minimal input
dimension for which superactivation (\ref{sa-cc}) may take place.
Then by using these two channels and a direct sum construction a
\emph{symmetric} example of superactivation (i.e. (\ref{sa-cc}) with
$\Phi_1=\Phi_2$) with input dimension $\dim\H_A=8$ was obtained
\cite[Theorem 1]{Duan}. In this section we will construct a
symmetric example of superactivation (\ref{sa-cc}) with the minimal
input dimension $\dim\H_A=4$ and the minimal Choi rank $\dim\H_E=3$.

Since a subspace $\mathfrak{L}$ of the algebra $\mathfrak{M}_n$ of
$n\times n$-matrices is a noncommutative graph of a particular
channel if and only if
\begin{equation}\label{L-cond}
\mathfrak{L}\;\,\textup{is
symmetric}\;\,(\mathfrak{L}=\mathfrak{L}^*)\;\,\textup{and contains
the unit matrix}
\end{equation}
(see Lemma 2 in \cite{Duan} and Proposition \ref{cmp} below), Lemma
\ref{trans-l} reduces the problem of finding channels for which
(\ref{sa-cc}) holds to the problem of finding transitive subspaces
$\mathfrak{L}_1$ and $\mathfrak{L}_2$ satisfying (\ref{L-cond}) such
that $\mathfrak{L}_1\otimes\mathfrak{L}_2$ is not transitive. It is
this way that was used in \cite{Duan} to construct the channels
$\Phi_1$ and $\Phi_2$ mentioned above.

It is interesting that the non-preserving of transitivity under
tensor product was known in the theory of operator subspaces: a
transitive subspace $\mathfrak{L}_0\subset\mathfrak{M}_4$ such that
$\mathfrak{L}_0\otimes\mathfrak{L}_0$ is not transitive was
constructed in \cite[Example 3.10]{DMR}. Moreover, the subspace
$\mathfrak{L}_0^{\bot}\doteq\{A\,|\,\Tr
AB=0\;\forall B\in\mathfrak{L}_0\}$ in this example also has the same
property. The above subspaces $\mathfrak{L}_0$
and $\mathfrak{L}_0^{\bot}$ consist respectively of the matrices
$$
\left[\begin{array}{cccc}
a &  b & h &  2g\\
c &  d & f &  e\\
e &  f & a &  b\\
g &  h & c &  d
\end{array}\right],\quad
\left[\begin{array}{cccc}
a &  b & -h &  -g\\
c &  d & -f &  -e\\
e &  f & -a &  -b\\
g/2 &  h & -c & -d
\end{array}\right],\quad a,b,c,d,e,f,g,h\in\mathbb{C}.
$$

This example does not give an example of superactivation of
one-shot zero-error classical capacity, since the subspaces
$\mathfrak{L}_0$ and $\mathfrak{L}_0^{\bot}$ are not symmetric.
Nevertheless, using a similar approach one can construct a
symmetric example.\medskip

\begin{theorem}\label{main+}
\emph{There exists a symmetric transitive subspace
$\mathfrak{L}\subseteq \mathfrak{M}_4$ with $\dim\mathfrak{L}=8$
containing the unit matrix such that
$\mathfrak{L}\otimes\mathfrak{L}$ is not transitive.}
\end{theorem}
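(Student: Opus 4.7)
The plan is to produce an explicit self-adjoint $8$-dimensional subspace $\mathfrak{L}\subseteq\mathfrak{M}_4$ containing $I_4$, modeled on the DMR subspace $\mathfrak{L}_0$ displayed above but modified so as to be closed under adjoint. All of the required properties will be checked via the criterion recorded in the proof of Lemma~\ref{trans-l}: a subspace $\mathfrak{L}\subseteq\mathfrak{M}_n$ is transitive if and only if its orthogonal complement $\mathfrak{L}^{\bot}=\{A\in\mathfrak{M}_n\mid \Tr AB=0\ \forall B\in\mathfrak{L}\}$ contains no non-zero rank-one operator.

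I would first write $\mathfrak{L}$ in $2\times 2$ block form
\[
\mathfrak{L}=\left\{\left[\begin{array}{cc}X & \Phi(X,Y)\\ Y & \Psi(X,Y)\end{array}\right]:X,Y\in\mathfrak{M}_2\right\}
\]
for linear maps $\Phi,\Psi\colon\mathfrak{M}_2\oplus\mathfrak{M}_2\to\mathfrak{M}_2$ chosen to preserve the DMR permutation-with-weight pattern that made $\mathfrak{L}_0\otimes\mathfrak{L}_0$ non-transitive, while simultaneously arranging that the rule induced by $A\mapsto A^{*}$ corresponds to a valid reparametrization of the same set. The conditions $\dim\mathfrak{L}=8$, $\mathfrak{L}=\mathfrak{L}^{*}$ and $I_4\in\mathfrak{L}$ can then be read off the parametrization. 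Transitivity of $\mathfrak{L}$ would be proved by showing, for an arbitrary non-zero $v\in\mathbb{C}^4$, that $\mathfrak{L}v=\mathbb{C}^4$; using the unitary symmetries built into the block construction (the swap of the two blocks, or block-diagonal unitaries preserving $\mathfrak{L}$) the verification reduces to a short list of representative vectors $v$, each giving an elementary linear-algebra calculation on the $2\times 2$ blocks. Equivalently, one checks that the $8$-dimensional subspace $\mathfrak{L}^{\bot}$ contains no rank-one operator.

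For the failure of transitivity of $\mathfrak{L}\otimes\mathfrak{L}$ the same criterion demands a rank-one operator in $(\mathfrak{L}\otimes\mathfrak{L})^{\bot}\subseteq\mathfrak{M}_{16}$. Equivalently, I exhibit vectors $\xi,\eta\in\mathbb{C}^4\otimes\mathbb{C}^4$ such that $\langle\xi|(A\otimes B)|\eta\rangle=0$ for all $A,B\in\mathfrak{L}$; identifying $\xi,\eta$ with $4\times 4$ matrices $\Xi,N$ this reads $\Tr(\Xi^{*}ANB^{t})=0$, and it suffices to verify the identity for $A,B$ ranging over a chosen basis of $\mathfrak{L}$. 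The pair $(\Xi,N)$ is selected by transporting the analogous rank-one witness from DMR's example for $\mathfrak{L}_0\otimes\mathfrak{L}_0$ and then correcting it to accommodate the new symmetry relations between the blocks of $\mathfrak{L}$.

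The main obstacle is the tension between $\mathfrak{L}=\mathfrak{L}^{*}$ and transitivity. Closure under the adjoint forces algebraic relations between the four $2\times 2$ blocks that tend to produce non-trivial elements in the commutant $[\mathfrak{L}]'$, and any non-trivial projection in $[\mathfrak{L}]'$ destroys transitivity by Proposition~\ref{com}. One must therefore tune $\Phi,\Psi$ so that $[\mathfrak{L}]'$ is trivial while the DMR-style asymmetry still survives strongly enough to keep a rank-one element inside the $64$-dimensional orthogonal complement $(\mathfrak{L}\otimes\mathfrak{L})^{\bot}$. This simultaneous balancing act -- triviality of the commutant on the one hand, preservation of the tensor-square rank-one witness on the other -- is the delicate step of the construction.
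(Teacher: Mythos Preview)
Your outline is pointed in the right direction and is, at the level of strategy, essentially what the paper does: write $\mathfrak{L}$ in $2\times2$ block form, choose the block maps so that $\mathfrak{L}=\mathfrak{L}^*\ni I_4$, verify transitivity by a case analysis on the block decomposition of $v$, and then produce a rank-one witness in $(\mathfrak{L}\otimes\mathfrak{L})^{\bot}$. But as written this is a plan, not a proof. The statement is an existence claim, so the proof \emph{is} the construction, and your proposal never commits to one. Saying ``tune $\Phi,\Psi$ so that $[\mathfrak{L}]'$ is trivial while the DMR witness survives'' names the difficulty without resolving it; nothing you have written rules out the possibility that every self-adjoint perturbation of $\mathfrak{L}_0$ either picks up a non-trivial commutant or loses the tensor-square rank-one witness.

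For comparison, the paper makes two concrete choices that collapse your ``delicate balancing act'' to routine computations. First, it takes the very special block shape
\[
\mathfrak{L}=\left\{\begin{bmatrix}A & \Phi(B)\\ B & A\end{bmatrix}: A,B\in\mathfrak{M}_2\right\},
\]
i.e.\ equal diagonal blocks and the off-diagonals related by a single linear automorphism $\Phi$ of $\mathfrak{M}_2$ (not your more general $\Phi(X,Y),\Psi(X,Y)$). Second, it fixes $\Phi$ to be the unitary with eigenvectors $C_1,\ldots,C_4$ (an explicit orthogonal basis of $\mathfrak{M}_2$) and eigenvalues $\mathrm{i},-\mathrm{i},1,-1$. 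With this choice, symmetry $\mathfrak{L}=\mathfrak{L}^*$ is immediate from $\Phi(B)^*=\Phi(B^*)$; transitivity follows from a short lemma (Lemma~\ref{dmr-l}) whose hypotheses are precisely that $\Phi$ has four distinct eigenvalues and no rank-one eigenvector; and non-transitivity of $\mathfrak{L}\otimes\mathfrak{L}$ is obtained not by hunting for $\xi,\eta$ directly but via the reformulation of Lemma~\ref{tpt-l} (transitivity of $\mathfrak{L}\otimes\mathfrak{L}$ is equivalent to $\mathfrak{L}A\mathfrak{L}^\top=\mathfrak{M}_4$ for all $A$), after which the single choice $A=\mathrm{diag}(1,-1)$ and a trace computation using $\sum_{i,j}(1-\lambda_i\lambda_j)\Tr C_iC_j^\top=0$ finish the job. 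These specific choices are the content of the proof; your proposal should either supply them or an alternative explicit $\mathfrak{L}$.
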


\medskip

We will need two lemmas. The first one is similar to  Lemma
2.1 in \cite{DMR}.\smallskip

\begin{lemma}\label{dmr-l} \emph{Let $\,\Phi: \mathfrak{M}_n \to \mathfrak{M}_n$ be
a linear isomorphism with $n^2$ different eigenvalues and such that
all eigenvectors of $\,\Phi^*$ have rank more than or equal to 2.
Then the subspace $$\mathfrak{L} =  \left\{\left[\begin{array}{cc}A&
\Phi(B)\\B&A\end{array}\right]  \;  | \; A, B\in
\mathfrak{M}_n\right\}$$ is transitive.}
\end{lemma}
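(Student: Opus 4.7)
The plan is to apply the Azoff-type criterion recalled in the proof of Lemma \ref{trans-l}: a subspace $\mathfrak{L}\subseteq\mathfrak{M}_{2n}$ is transitive if and only if the orthogonal complement $\mathfrak{L}^{\perp}$ (with respect to the pairing $\Tr(MX)$) contains no rank-one operator. So the whole task is to rule out rank-one matrices in $\mathfrak{L}^{\perp}$.

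First I would compute $\mathfrak{L}^{\perp}$ explicitly. Writing $M=\left[\begin{smallmatrix}M_{11}&M_{12}\\M_{21}&M_{22}\end{smallmatrix}\right]$ and $X=\left[\begin{smallmatrix}A&\Phi(B)\\B&A\end{smallmatrix}\right]$, the identity $\Tr(MX)=\Tr((M_{11}+M_{22})A)+\Tr(M_{12}B)+\Tr(\Phi^*(M_{21})B)$ forces the defining relations
$$
M_{22}=-M_{11},\qquad M_{12}=-\Phi^*(M_{21}).
$$
Thus a rank-one candidate $M=\xi\eta^{T}$ with $\xi=(\xi_1,\xi_2)^{T}$, $\eta=(\eta_1,\eta_2)^{T}$ produces the two conditions
$$
\xi_1\eta_1^{T}+\xi_2\eta_2^{T}=0,\qquad \xi_1\eta_2^{T}=-\Phi^*(\xi_2\eta_1^{T}).
$$

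Next I would split into cases. The degenerate case is when one of the blocks $\xi_i\eta_j^{T}$ vanishes; a short check using the injectivity of $\Phi^*$ (which follows from $\Phi$ being an isomorphism) forces $M=0$. In the non-degenerate case, the identity $\xi_1\eta_1^{T}=-\xi_2\eta_2^{T}$ of non-zero rank-one matrices forces $\xi_2=a\xi_1$ and $\eta_2=b\eta_1$ for scalars with $ab=-1$. Substituting into the second condition yields
$$
\Phi^*(\xi_1\eta_1^{T})=-\frac{b}{a}\,\xi_1\eta_1^{T},
$$
so $\xi_1\eta_1^{T}$ is a rank-one eigenvector of $\Phi^*$. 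This contradicts the hypothesis that every eigenvector of $\Phi^*$ has rank at least two, completing the proof.

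The main obstacle is really bookkeeping: one must set up the pairing carefully so that $\Phi^*$ appears in the right place, and then handle the degenerate cases (one of $\xi_i$ or $\eta_j$ vanishing) without overlooking a configuration. Once the structural identities are on the table, the spectral hypothesis on $\Phi^*$ does the work almost by itself; the assumption of $n^2$ distinct eigenvalues is not needed for transitivity per se, but guarantees that the condition ``all eigenvectors of $\Phi^*$ have rank $\ge 2$'' is verifiable via $n^2$ one-dimensional eigenspaces, which will be useful when Lemma \ref{dmr-l} is applied later.
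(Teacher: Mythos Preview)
Your proof is correct and takes a genuinely different route from the paper's.

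The paper proves transitivity \emph{directly}: given a nonzero input $(x,y)$ and target $(z_1,z_2)$, it solves the $2\times 2$ block system for $A,B$ by a four-case analysis on whether $x,y$ vanish or are proportional; the only nontrivial case $x=\lambda y$ reduces to $(\Phi-\lambda^2)(B)\,y=z_1-\lambda z_2$, and one needs $\mathrm{Ran}(\Phi-\lambda^2)$ transitive, i.e.\ $\mathrm{Ker}(\Phi^*-\overline{\lambda}^2)$ free of rank-one elements, which the eigenvector hypothesis supplies. You instead attack the dual side via Azoff's criterion, compute $\mathfrak{L}^{\perp}$ explicitly, and show a rank-one $\xi\eta^{T}$ there forces (in the nondegenerate case) $\xi_2=a\xi_1$, $\eta_2=b\eta_1$ with $ab=-1$, making $\xi_1\eta_1^{T}$ a rank-one eigenvector of $\Phi^*$---precisely the dual of the paper's obstruction. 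Your argument is a bit cleaner conceptually and makes transparent that the $n^2$-distinct-eigenvalues hypothesis is not actually used for transitivity (a point the paper obscures by invoking one-dimensionality of eigenspaces, though any nonzero kernel element is already an eigenvector); the paper's argument, in exchange, is constructive and tells you how to hit a prescribed target vector.
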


\begin{proof} Given $z_1, z_2, x, y \in \mathbb C^n$ with $\|x\|^2+\|y\|^2 \neq 0$, we need
to find $A$ and $B$ in $\mathfrak{M}_n$ such that
$$
\left[\begin{array}{cc}A&
\Phi(B)\\B&A\end{array}\right]\left[\begin{array}{c} x
\\y\end{array}\right] = \left[\begin{array}{c} z_1
\\z_2\end{array}\right].
$$

Case 1: $x, y\neq 0, x\neq \lambda y$. Take $B=0$, $A$ such that
$Ax=z_1, Ay = z_2.$

Case 2: $x=0, y\neq 0.$ Take $A$ such that $Ay = z_2$ and $B$ such
that $\Phi(B)y = z_1$ (this is possible, since $\Phi$ is an
isomorphism).

Case 3: $x\neq 0, y=0$. It is similar to the case 2.

Case 4: $x, y\neq 0, x=\lambda y.$ We need to find $A, B$ such that
$$\lambda Ay + \Phi(B)y = z_1, \;\; \lambda By +Ay = z_2.$$
Expressing $Ay$ from the second equation and substituting into the
first one, we get:

\begin{equation}\label{eq-n}
Ay = z_2- \lambda By,
\end{equation}
and  $\lambda z_2 - \lambda^2By + \Phi(B)y = z_1$, whence
$(\Phi(B)-\lambda^2B)y = z_1-\lambda z_2.$ It has a solution if $\,\mathrm{Ran} (\Phi -\lambda^2)\,$ is transitive or,
equivalently, $\mathrm{Ker} (\Phi^* - \overline \lambda^2$) does not
contain a 1-rank operator. If $\lambda^2$ is not an eigenvalue of
$\Phi$ then it holds. If it is an eigenvalue, then this kernel is a
1-dimensional subspace generated by a matrix of rank $\ge 2$, so it
again holds. And now one finds $A$ from (\ref{eq-n}).\end{proof}

\begin{lemma}\label{tpt-l} \emph{Let $\,\mathfrak{L}$ be a subspace of $\,\mathfrak{M}_n$. The
subspace $\,\mathfrak{L}\otimes \mathfrak{L}$ is transitive if and
only if the subspace
$\mathfrak{L}A\mathfrak{L}^{\top}\doteq\{\sum_iX_iAY_i^{\top}\,|\,X_i,Y_i\in\mathfrak{L}\}$
coincides with $\,\mathfrak{M}_n$ for each $A\in \mathfrak{M}_n$}.
\end{lemma}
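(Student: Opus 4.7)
The plan is to translate the statement through the standard ``vec'' isomorphism $\iota\colon\mathbb{C}^n\otimes\mathbb{C}^n\to \mathfrak{M}_n$ defined on the canonical basis by $\iota(|i\rangle\otimes|j\rangle)=E_{ij}=|i\rangle\langle j|$. Under this isomorphism vectors of $\mathbb{C}^n\otimes\mathbb{C}^n$ are identified with $n\times n$ matrices, and the action of an elementary tensor $X\otimes Y\in\mathfrak{M}_n\otimes\mathfrak{M}_n$ becomes two-sided multiplication on the matrix side.

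First I would verify the compatibility identity
\[
\iota\bigl((X\otimes Y)v\bigr)=X\,\iota(v)\,Y^{\top}\qquad\text{for all }X,Y\in\mathfrak{M}_n,\ v\in\mathbb{C}^n\otimes\mathbb{C}^n.
\]
By bilinearity it suffices to check it on decomposable vectors $v=|a\rangle\otimes|b\rangle$: we have $(X\otimes Y)v=X|a\rangle\otimes Y|b\rangle$, which $\iota$ sends to the rank-one matrix $(Xa)(Yb)^{\top}=X(ab^{\top})Y^{\top}=X\,\iota(v)\,Y^{\top}$. Linear extension then yields $\iota\bigl((\sum_i X_i\otimes Y_i)v\bigr)=\sum_iX_i\,\iota(v)\,Y_i^{\top}$ for any element of $\mathfrak{L}\otimes\mathfrak{L}$.

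Next I would simply apply Definition 1. Transitivity of $\mathfrak{L}\otimes\mathfrak{L}\subseteq\B(\mathbb{C}^n\otimes\mathbb{C}^n)$ means that for every vector $v\in\mathbb{C}^n\otimes\mathbb{C}^n$ the linear subspace $(\mathfrak{L}\otimes\mathfrak{L})|v\rangle$ coincides with $\mathbb{C}^n\otimes\mathbb{C}^n$. Pushing this condition through $\iota$ and writing $A=\iota(v)$, it becomes the requirement that $\mathfrak{L}A\mathfrak{L}^{\top}=\mathfrak{M}_n$ for every $A\in\mathfrak{M}_n$, which is the stated equivalence.

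I do not expect a genuine obstacle: once the vec identity is set up, the equivalence is immediate. The only minor points to keep straight are the appearance of the transpose on the right factor (an artefact of the identification $|a\rangle\otimes|b\rangle\leftrightarrow ab^{\top}$) and the fact that the condition must hold for \emph{every} $A\in\mathfrak{M}_n$, matching the ``for every $v$'' in the definition of transitivity.
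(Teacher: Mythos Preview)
Your proposal is correct and follows essentially the same route as the paper: the paper's isomorphism $U:x\otimes y\mapsto x\,y^{\top}$ is your $\iota$, and the paper's compatibility $U[T\otimes S]z=\Lambda(T\otimes S)Uz$ with $\Lambda(T\otimes S)=L_{T}R_{S^{\top}}$ is exactly your identity $\iota((X\otimes Y)v)=X\,\iota(v)\,Y^{\top}$. The only cosmetic difference is that you spell out the verification on decomposable vectors, whereas the paper simply states the map $\Lambda$ and its agreement with $U$.
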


\begin{proof}
We may identify $\mathbb C^n\otimes \mathbb C^n$ with
$\mathfrak{M}_n$  by the linear isomorphism $U:
x\otimes y \mapsto x\cdot y^{\top}$ (we assume that $x,y$ are columns).

There exists a linear isomorphism $\Lambda: \B(\mathbb C^n\otimes
\mathbb C^n) \to \B(\mathfrak{M}_n)$ given by $\Lambda(T\otimes S) =
L_TR_{S^{\top}}$ (left multiplication by $T$ and right
multiplication by $S^{\top}$), which agrees with $U$ in the sense
that
$$U[T\otimes S]z = \Lambda(T\otimes
S)Uz\quad\forall z\in\mathbb C^n\otimes
\mathbb C^n.
$$
This implies the assertion of the lemma.
\end{proof}

\bigskip

{\it Proof of Theorem \ref{main+}.}
Let $$C_1 = \left[\begin{array}{cc} 0 & \mathrm{i} \\1 & 0
\end{array}\right], \; C_2 = \left[\begin{array}{cr} 0 & -\mathrm{i} \\1 & 0
\end{array}\right], \; C_3 = \left[\begin{array}{cr} 1 & 0 \\0 & 1
\end{array}\right], \; C_4 = \left[\begin{array}{cr} 1 & 0 \\0 & -1
\end{array}\right].$$ These matrices form an orthogonal basis in
$\mathfrak{M}_2$.  Let $\lambda_1=\mathrm{i}, \lambda_2 = -\mathrm{i}, \lambda_3 = 1,
\lambda_4 = -1.$ We define an unitary map $\Phi: \mathfrak{M}_2 \to
\mathfrak{M}_2$ by $\Phi(C_i) = \lambda_i C_i.$

Let $\mathfrak{L}= \left\{\left[\begin{array}{cc}A&
\Phi(B)\\B&A\end{array}\right]  \; | \; A, B\in
\mathfrak{M}_2\right\}$ be a subspace of $\mathfrak{M}_4$. Since
$\Phi\left(\left[\begin{array}{cc}a&
b\\c&d\end{array}\right]\right)=\left[\begin{array}{cr}d&
-c\\b&a\end{array}\right]$, the subspace $\mathfrak{L}$ consists of
the matrices
$$
\left[\begin{array}{cccc}
a &  b & h &  -g\\
c &  d & f &  e\\
e &  f & a &  b\\
g &  h & c &  d
\end{array}\right],\quad a,b,c,d,e,f,g,h\in\mathbb{C}.
$$

It is
clear that
$\dim\mathfrak{L}=\dim\mathfrak{M}_2+\dim\mathfrak{M}_2=8$ and that the subspace $\mathfrak{L}$ is symmetric.
\smallskip
Transitivity of $\mathfrak{L}$  follows from Lemma \ref{dmr-l}.
\smallskip

To prove that $\mathfrak{L}\otimes \mathfrak{L}$ is not transitive
it suffices, by Lemma \ref{tpt-l},  to show that
$\mathfrak{L}\left[\begin{array}{cc}1&
0\\0&-1\end{array}\right]\mathfrak{L}^{\top} \neq \mathfrak{M}_4.$
We have

\begin{equation}\label{1}
\!\!\!\!\!\begin{array}{l} \mathfrak{L}\left[\begin{array}{cc}1&
0\\0&-1\end{array}\right]\mathfrak{L}^{\top}=
\\\\
\displaystyle \left\{\sum_i \left[\begin{array}{cc}A^1_i&
\Phi(B^1_i)\\B^1_i&A^1_i\end{array}\right]\!
\left[\begin{array}{cc}1&
0\\0&-1\end{array}\right]\!\left[\begin{array}{cc}A^2_i& \!\!\!B_i^{2\top}\\
\Phi(B^2_i)^{\top}& \!\!\!A^2_i\end{array}\right] |\, A^{1,2}_i,
B^{1,2}_i \in \mathfrak{M}_2\right\} = \\\\ \displaystyle
\left\{\sum_i \left[\begin{array}{cc}A^1_i&
-\Phi(B^1_i)\\B^1_i&-A^1_i\end{array}\right]\!
\left[\begin{array}{cc}A^2_i&
B_i^{2\top}\\\Phi(B^2_i)^{\top}&A^2_i\end{array}\right] |\,
A^{1,2}_i, B^{1,2}_i
 \in \mathfrak{M}_2\right\} = \\\\ \displaystyle
\left\{\left[\begin{array}{cc} \!\sum_i (A^1_iA^2_i -
\Phi(B^1_i)\Phi(B^2_i)^{\top})\!\! & \ldots \\\ldots & \!\!\sum_i
(B^1_iB_i^{2\top} - A^1_iA^2_i)\!\!\end{array}\right] | \,A^{1,2}_i,
B^{1,2}_i \in \mathfrak{M}_2\right\}\!\!\!\!\!
\end{array}
\end{equation}

Let $B^1, B^2\in \mathfrak{M}_2$. We can write them as $B^1 = \sum_i
t_iC_i, B^2 = \sum_i s_iC_i$. Since $\Tr C_iC_j^{\top}\neq 0$ only
in the cases: a)~$i=1, j=2$, b)~$i=2, j=1$, c)~$i=j=3$, d)~$i=j =
4$, we obtain

\begin{equation}\label{2}
\!\!\!\!\begin{array}{l} \displaystyle \Tr (B^1B^{2\top} -
\Phi(B^1)\Phi(B^2)^{\top}) = \\\\ \displaystyle \Tr
\left(\sum_{i,j} t_is_jC_iC_j^{\top} - \sum_{i,j} \lambda_it_i\lambda_js_jC_iC_j^{\top}\right) = \\\\
\displaystyle \sum_{i,j} \Tr
(1-\lambda_i\lambda_j)t_is_jC_iC_j^{\top} = \\\\ \displaystyle
\Tr(1-\lambda_1\lambda_2)t_1s_2C_1C_2^{\top} +
\Tr(1-\lambda_2\lambda_1)t_2s_1C_2C_1^{\top} + \\\\ \displaystyle
+\Tr (1-\lambda_3^2)t_3s_3C_3C_3^{\top} + \Tr
(1-\lambda_4^2)t_4s_4C_4C_4^{\top} = 0.
\end{array}\!\!\!
\end{equation}
It follows from (\ref{1}) and (\ref{2}) that for any $T\in
\mathfrak{L}\left[\begin{array}{cc}1&
0\\0&-1\end{array}\right]\mathfrak{L}^{\top}$ we have
$$
\Tr
(T_{11}+T_{22})=0.
$$
Thus $\mathfrak{L}\left[\begin{array}{cc}1&
0\\0&-1\end{array}\right]\mathfrak{L}^{\top} \neq \mathfrak{M}_4.$
\qed

\medskip

To derive from Theorem \ref{main+} an example of superactivation of
one-shot  zero-error classical capacity with smallest possible
dimension we need the following observation (which is a strengthened
version of Lemma 2 in \cite{Duan}).\smallskip

\begin{property}\label{cmp} \emph{Let $\,\mathfrak{L}$ be a subspace of
$\,\mathfrak{M}_n$, $n\geq2$, and $m$ the minimal natural number
such that $\,\dim\mathfrak{L}\leq m^2$. The following statements are
equivalent:}
\begin{enumerate}[(i)]
    \item \emph{$\mathfrak{L}$ is symmetric ($\mathfrak{L}^*=\mathfrak{L}$) and contains the unit
    matrix;}
    \item \emph{there exists an entanglement-breaking channel $\,\Psi:\mathfrak{M}_n\rightarrow\mathfrak{M}_m$
    such that $\mathfrak{L}=\Psi^*(\mathfrak{M}_m)$ ($\,\Psi^*:\mathfrak{M}_m\rightarrow\mathfrak{M}_n$ is a dual map to the channel $\Psi$).}
    \item \emph{there exists a pseudo-diagonal \footnote{A channel $\Phi:\T(\H_A)\rightarrow\T(\H_B)$ is called pseudo-diagonal
    if it has the representation
$$
\Phi(\rho)=\sum_{i,j}c_{ij}\langle
\psi_i|\rho|\psi_j\rangle|i\rangle\langle j|,\quad\rho\in\T(\H_A),
$$
where $\{c_{ij}\}$ is a Gram matrix of a collection of unit vectors,
$\{|\psi_i\rangle\}$ is a collection of vectors in $\H_A$ such that
$\;\sum_i |\psi_i\rangle\langle \psi_i|=I_{\H_A}\,$  and
$\{|i\rangle\}$ is an orthonormal basis in $\H_B$ \cite{R}.} channel
$\,\Phi:\mathfrak{M}_n\rightarrow\mathfrak{M}_{nm}$ with the Choi
rank $\,m$
    such that $\,\mathfrak{L}=\mathcal{G}(\Phi)$ (the noncommutative graph of
    $\,\Phi$).}
\end{enumerate}
\end{property}

\begin{proof} $\mathrm{(ii)\Rightarrow (i)}$ is obvious.\smallskip

$\mathrm{(i)\Rightarrow (ii)}$.  We will show first that there is a
basis $\{A_i\}_{i=1}^d$ of $\mathfrak{L}$ with all $A_i$'s being
positive such that $\sum_{i=1}^d A_i=I_n$ (the unit matrix in
$\mathfrak{M}_n$). It is sufficient to show that such a basis exists
in the real space
$\mathfrak{L}_{sa}=\{A\in\mathfrak{L}\,|\,A=A^*\}$, since it will
also be a basis for $\mathfrak{L}$ over $\mathbb C$ (by symmetricity
of $\mathfrak{L}$). Since any ball generates the whole space, we can
find a basis $I_n, \tilde A_2, \ldots, \tilde A_n$ with all $\tilde
A_i$ belonging to a ball in $\mathfrak{L}_{sa} $ with centrum $I_n$
and of radius, say,  $1/2$. Since for any $A=A^*\in \mathfrak{M}_n$,
$\|I_n-A\| < 1$ implies that $A\ge 0$, we conclude that $\tilde A_i
\ge 0$. Now let $M$ be a sufficiently large number such that
$I_n-\sum_{i=2}^n \tilde{A}_i/M \ge 0$. Let $A_1 = I_n-\sum_{i=2}^n
\tilde{A}_i/M.$ It is easy to see that $A_1, \tilde A_2, \ldots,
\tilde A_n$ form a basis and $$I_n = A_1 + \sum_{i=2}^n \tilde
A_i/M.$$ Now take $A_i = \tilde A_i/M, i = 2,\ldots, n$.
\smallskip

Let $\{B_i\}_{i=1}^d$,
$d=\dim\mathfrak{L}$, be a set of positive linearly independent
matrices in $\mathfrak{M}_m$ with unit trace. Consider the unital completely positive map
$$
\mathfrak{M}_m\ni X\mapsto\Psi^*(X) = \sum_{i=1}^d[\Tr B_i
X]A_i\in\mathfrak{M}_n
$$
Apparently $\mathrm{Ran} \Psi^* \subseteq \mathfrak{L}$. To see that it is
exactly $\mathfrak{L}$, we will show that each $A_i$ is in the
range. For that we just take any $X\in \mathfrak{M}_m$ such that
$\Tr B_j X=0$ for all $j\neq i$ and $\Tr B_i X\neq 0,$ which exists
since $B_i$'s are linearly independent.

Since the map $\Psi^*$ has the Kraus representation consisting of
1-rank operators, the predual map
$\Psi:\mathfrak{M}_n\rightarrow\mathfrak{M}_m$ is an
entanglement-breaking quantum channel.

$\mathrm{(ii)\Leftrightarrow (iii)}$  It suffices to note that a
pseudo-diagonal channel is complementary to an entanglement-breaking
channel and vice versa \cite{R}.
\end{proof}

The proof of Proposition \ref{cmp} can be used to obtain an explicit
formula for a channel $\Phi$ with given noncommutative
graph.\smallskip

\begin{corollary}\label{cmp-c}
\emph{Let $\,\mathfrak{L}$ be a subspace of $\,\mathfrak{M}_n$,
$n\geq2$, satisfying (\ref{L-cond}) and $m$ the minimal natural
number such that $\,d=\dim\mathfrak{L}\leq m^2$. There is a
pseudo-diagonal channel $\,\Phi$ with $\,\dim\H_{A}=n$,
$\,\dim\H_{E}=m$ and $\,\dim\H_{B}\leq mn$ such that
$\mathcal{G}(\Phi)=\mathfrak{L}$ represented as follows
\begin{equation}\label{ch-rep}
\mathfrak{M}_n\ni\rho\mapsto\Phi(\rho)=\sum_{i,j=1}^d
c_{ij}A_i^{1/2}\rho A_j^{1/2}\otimes|i\rangle\langle
j|\in\mathfrak{M}_n\otimes\mathfrak{M}_d,
\end{equation}
where $\{A_i\}_{i=1}^d$ is a basis of $\,\mathfrak{L}$ such that
$\,\sum_{i=1}^d A_i=I_n$ and $A_i\geq0$ for all $i$, $\{c_{ij}\}$ is
the Gram matrix of a set $\,\{|\psi_i\rangle\}_{i=1}^d$ of unit
vectors in $\mathbb{C}^m$ such that the set
$\,\{|\psi_i\rangle\langle\psi_i|\}_{i=1}^d$ is linearly independent
and $\{|i\rangle\}$ is the canonical basis in $\mathbb{C}^d$.}
\end{corollary}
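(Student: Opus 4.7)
The plan is to exhibit the channel $\Phi$ explicitly through a single Stinespring dilation and to read off each of the three items (noncommutative graph equal to $\mathfrak{L}$, pseudo-diagonality, dimensions) from it. I would start from the basis $\{A_i\}_{i=1}^d$ of $\mathfrak{L}$ with $A_i\geq 0$ and $\sum_i A_i = I_n$ produced in the proof of Proposition \ref{cmp}, and pick unit vectors $\{|\psi_i\rangle\}_{i=1}^d$ in $\mathbb{C}^m$ with $\{|\psi_i\rangle\langle\psi_i|\}$ linearly independent; the minimality of $m$ (which forces $d>(m-1)^2$) guarantees that these vectors span $\mathbb{C}^m$.

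Define the isometry
$$
W = \sum_{i=1}^d A_i^{1/2}\otimes |i\rangle \otimes |\psi_i\rangle : \mathbb{C}^n \to \mathbb{C}^n \otimes \mathbb{C}^d \otimes \mathbb{C}^m.
$$
The identity $W^*W = \sum_{i,j} A_i^{1/2}A_j^{1/2}\langle i|j\rangle\langle\psi_i|\psi_j\rangle = \sum_i A_i = I_n$ is immediate, and tracing $W\rho W^*$ over the last factor recovers formula (\ref{ch-rep}) (up to the conventional choice $c_{ij}=\langle\psi_j|\psi_i\rangle$). Tracing instead over the first two factors yields the complementary channel $\widehat{\Phi}(\rho)=\sum_i\Tr(A_i\rho)|\psi_i\rangle\langle\psi_i|$, so that $\widehat{\Phi}^*(X)=\sum_i\langle\psi_i|X|\psi_i\rangle A_i$. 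The inclusion $\mathcal{G}(\Phi)\subseteq\mathfrak{L}$ is then obvious, and the opposite inclusion follows because the linear independence of $\{|\psi_i\rangle\langle\psi_i|\}$ produces $X_k\in\mathfrak{M}_m$ with $\langle\psi_i|X_k|\psi_i\rangle=\delta_{ik}$, hence $\widehat{\Phi}^*(X_k)=A_k$.

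To see that $\Phi$ is pseudo-diagonal, I would spectrally decompose $A_i^{1/2}=\sum_\alpha\sqrt{\lambda_\alpha^{(i)}}|v_\alpha^{(i)}\rangle\langle v_\alpha^{(i)}|$, set $|\phi_{(i,\alpha)}\rangle=\sqrt{\lambda_\alpha^{(i)}}|v_\alpha^{(i)}\rangle$, $|(i,\alpha)\rangle=|v_\alpha^{(i)}\rangle\otimes|i\rangle$ and $\tilde c_{(i,\alpha)(j,\beta)}=c_{ij}$, and substitute into (\ref{ch-rep}) to obtain the canonical form $\Phi(\rho)=\sum_{I,J}\tilde c_{IJ}\langle\phi_I|\rho|\phi_J\rangle|I\rangle\langle J|$. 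The identities $\sum_I|\phi_I\rangle\langle\phi_I|=\sum_i A_i=I_n$, the orthonormality of $\{|(i,\alpha)\rangle\}$, and the fact that $\tilde c_{IJ}$ is the Gram matrix of the unit vectors $\tilde\psi_{(i,\alpha)}:=\psi_i$ are all immediate.

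Finally I would settle the three dimensional bounds. For the Choi rank, expanding $c_{ij}=\sum_k\overline{\psi_i^k}\psi_j^k$ in an orthonormal basis of $\mathbb{C}^m$ gives an $m$-term Kraus decomposition of $\Phi$ with operators $W_k=\sum_i\overline{\psi_i^k}\,A_i^{1/2}\otimes|i\rangle$, so $\dim\H_E\leq m$; the matching lower bound follows from $\dim\mathcal{G}(\Phi)=d>(m-1)^2$ combined with $\dim\mathcal{G}(\Phi)\leq(\dim\H_E)^2$. The delicate step, and the main obstacle of the argument, is the bound $\dim\H_B\leq mn$: the naive ambient space $\mathbb{C}^n\otimes\mathbb{C}^d$ has dimension $nd\leq nm^2$, which is too large. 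Taking $\H_B:=\supp\Phi(I_n)=\sum_{k=1}^m\mathrm{Ran}(W_k)$ resolves this, since each $\mathrm{Ran}(W_k)\subseteq\mathbb{C}^n\otimes\mathbb{C}^d$ has dimension at most $n$.
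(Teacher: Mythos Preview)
Your proof is correct and follows essentially the same route as the paper: both construct the very same Stinespring isometry $V=\sum_i A_i^{1/2}\otimes|i\rangle\otimes|\psi_i\rangle$, trace out $\mathbb{C}^m$ to obtain formula~(\ref{ch-rep}), and trace out $\mathbb{C}^n\otimes\mathbb{C}^d$ to identify $\widehat{\Phi}$ with the entanglement-breaking channel $\Psi(\rho)=\sum_i\Tr(A_i\rho)|\psi_i\rangle\langle\psi_i|$ from Proposition~\ref{cmp}. The only difference is one of bookkeeping: the paper disposes of pseudo-diagonality and the bound $\dim\H_B\leq mn$ in one stroke by quoting that $\Phi$ is complementary to an entanglement-breaking channel $\Psi:\mathfrak{M}_n\to\mathfrak{M}_m$ (so pseudo-diagonal by \cite{R}, and with output dimension at most the Choi rank of $\Psi$, which cannot exceed $mn$), whereas you verify both facts by hand via the spectral decomposition of the $A_i$ and the $m$-term Kraus decomposition $W_k=\sum_i\overline{\psi_i^k}A_i^{1/2}\otimes|i\rangle$.
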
\medskip

By representation (\ref{ch-rep}) the channel $\Phi$ maps a state
$\rho\in\mathfrak{M}_n$ into the $d\times d$ matrix
$\left[c_{ij}A_i^{1/2}\rho A_j^{1/2}\right]$ with entries in
$\mathfrak{M}_n$. Its formal output dimension $nd$ may be greater
than $mn$, but the real output dimension is $\leq mn$ (since $\Phi$
is complementary to a channel from $\mathfrak{M}_n$ into
$\mathfrak{M}_m$, see the proof). If $d>m$ this means that all the
states $\Phi(\rho)$ in (\ref{ch-rep}) are supported by a proper
subspace $\H_0\subset\mathbb{C}^n\otimes\mathbb{C}^d$ such that
$\dim\H_0\leq mn$.

\begin{proof} The proof of Proposition \ref{cmp} shows that a channel $\Phi$ with the
stated properties can be constructed as the complementary channel to
the channel
$$
\Psi(\rho) = \sum_{i=1}^d[\Tr A_i \rho]B_i,
$$
where $\{A_i\}\subset\mathfrak{M}_n$ is a basis of $\mathfrak{L}$
determined in that proof and $\{B_i\}\subset\mathfrak{M}_m$ is any
linearly independent set of positive matrices with unit trace. We
may assume that $B_i=|\psi_i\rangle\langle\psi_i|$ for all
$i=\overline{1,d}$, where $\{|\psi_i\rangle\}_{i=1}^d$ is a set of
unit vectors in $\mathbb{C}^m$ such that the set
$\,\{|\psi_i\rangle\langle\psi_i|\}_{i=1}^d$ is linearly
independent. Consider the linear operator
$$
V:|\varphi\rangle\mapsto\sum_{i=1}^dA_i^{1/2}|\varphi\rangle\otimes|i\rangle\otimes|\psi_i\rangle
$$
from $\mathbb{C}^n$ into
$\mathbb{C}^n\otimes\mathbb{C}^d\otimes\mathbb{C}^m$, where
$\{|i\rangle\}$ is the canonical basis in $\mathbb{C}^d$.\smallskip

Since $\sum_{i=1}^d A_i=I_n$ and $\|\psi_i\|=1$ for all $i$, $V$ is
an isometry. It is easy to see that
$$
\Tr_{\mathbb{C}^n\otimes\mathbb{C}^d}V|\varphi\rangle\langle\varphi|V^*=\sum_{i=1}^d[\Tr
A_i
|\varphi\rangle\langle\varphi|]|\psi_i\rangle\langle\psi_i|,\quad
\varphi\in\mathbb{C}^n.
$$
So, $\Psi(\rho)=\Tr_{\mathbb{C}^n\otimes\mathbb{C}^d}V\rho V^*$ and
hence
$$
\Phi(\rho)=\widehat{\Psi}(\rho)=\Tr_{\mathbb{C}^m}V\rho
V^*=\sum_{i,j=1}^d \langle\psi_j|\psi_i\rangle A_i^{1/2}\rho
A_j^{1/2}\otimes|i\rangle\langle j|,\quad \rho\in\mathfrak{M}_n.
$$
\end{proof}

Using the subspace $\mathfrak{L}$
from Theorem \ref{main+} and applying Proposition \ref{cmp}
we obtain  the following corollary.\smallskip

\begin{corollary}\label{main-c+}
\emph{There is a  pseudo-diagonal channel
$\,\Phi:\T(\H_{A})\rightarrow\T(\H_{B})$ with $\,\dim\H_{A}=4$,
$\,\dim\H_{E}=3$ and  $\,\dim\H_{B}\leq12$,  for which the
following symmetric form of superactivation of one-shot zero-error
classical capacity holds:}
\begin{equation}\label{sa-cc+}
    \bar{C}_0(\Phi)=0,\quad\textit{but}\quad
    \bar{C}_0(\Phi\otimes\Phi)>0.
\end{equation}
\end{corollary}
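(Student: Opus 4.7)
The plan is to combine Theorem \ref{main+} with Corollary \ref{cmp-c} and then invoke Lemma \ref{trans-l} twice, once for $\Phi$ and once for $\Phi\otimes\Phi$.

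First, I would take the symmetric transitive subspace $\mathfrak{L}\subseteq\mathfrak{M}_4$ produced by Theorem \ref{main+}, noting that $\dim\mathfrak{L}=8$, that $\mathfrak{L}$ contains $I_4$, and that $\mathfrak{L}\otimes\mathfrak{L}$ fails to be transitive. The minimal $m$ with $\dim\mathfrak{L}\leq m^2$ is $m=3$ (since $8\leq 9$ but $8>4$). Applying Corollary \ref{cmp-c} to $\mathfrak{L}$ with $n=4$, $m=3$ produces a pseudo-diagonal channel $\Phi:\T(\H_A)\to\T(\H_B)$ with $\dim\H_A=4$, Choi rank $\dim\H_E=3$, and $\dim\H_B\leq mn=12$, such that $\mathcal{G}(\Phi)=\mathfrak{L}$.

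Next, by Lemma \ref{trans-l}, transitivity of $\mathcal{G}(\Phi)=\mathfrak{L}$ is equivalent to $\bar{C}_0(\Phi)=0$, giving the first half of (\ref{sa-cc+}). For the second half, the key step is to identify the noncommutative graph of the tensor product channel: since the complementary of a tensor product channel is the tensor product of the complementaries (this follows directly from taking a tensor product of Stinespring dilations (\ref{Stinespring-rep})), one has $\widehat{\Phi\otimes\Phi}=\widehat{\Phi}\otimes\widehat{\Phi}$, hence
$$
\mathcal{G}(\Phi\otimes\Phi)=(\widehat{\Phi}\otimes\widehat{\Phi})^{*}(\B(\H_E\otimes\H_E))=\mathcal{G}(\Phi)\otimes\mathcal{G}(\Phi)=\mathfrak{L}\otimes\mathfrak{L}.
$$
Since $\mathfrak{L}\otimes\mathfrak{L}$ is not transitive by Theorem \ref{main+}, Lemma \ref{trans-l} yields $\bar{C}_0(\Phi\otimes\Phi)>0$.

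The proof is essentially an assembly of already-established pieces; there is no real obstacle. The only point that deserves a brief explicit comment is the identification $\mathcal{G}(\Phi\otimes\Phi)=\mathcal{G}(\Phi)\otimes\mathcal{G}(\Phi)$, which should be recorded explicitly because the whole superactivation statement reduces to it together with the transitivity behaviour of $\mathfrak{L}$. The dimensional bounds $\dim\H_A=4$, $\dim\H_E=3$, $\dim\H_B\leq 12$ come for free from Corollary \ref{cmp-c}, and the symmetry of the example (using the same channel $\Phi$ on both tensor factors) is inherited from the symmetry of the construction in Theorem \ref{main+}.
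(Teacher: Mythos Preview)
Your proposal is correct and follows essentially the same route as the paper: the paper simply states that Corollary~\ref{main-c+} is obtained by applying Proposition~\ref{cmp} to the subspace $\mathfrak{L}$ from Theorem~\ref{main+}, with the reduction via Lemma~\ref{trans-l} already announced at the start of Section~3.1. Your version is just more explicit about the identification $\mathcal{G}(\Phi\otimes\Phi)=\mathcal{G}(\Phi)\otimes\mathcal{G}(\Phi)$ and about using Corollary~\ref{cmp-c} rather than Proposition~\ref{cmp} to read off the dimensions $\dim\H_A=4$, $\dim\H_E=3$, $\dim\H_B\le 12$.
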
\smallskip
By finding a basis $\{A_i\}_{i=1}^8$ of $\,\mathfrak{L}$ such that
$\,\sum_{i=1}^8 A_i=I_4$ and $A_i\geq0$ for all $i$ and applying
Corollary \ref{cmp-c} one can obtain an explicit expression for a
channel $\Phi$ having the properties stated in Corollary
\ref{main-c+}.

In \cite[Theorem 1]{Duan} the same statement  was established with
$\dim\H_{A}=8$ and it was mentioned that (\ref{sa-cc+}) does not
hold for any channel $\Phi$ with $\dim\H_{A}<4$. So, Corollary
\ref{main-c+} gives a symmetric example of superactivation of
one-shot zero-error classical capacity with \emph{minimal input
dimension $\,\dim\H_{A}$ and minimal Choi rank $\,\dim\H_{E}$}.
Minimality of $\,\dim\H_{E}=3\,$ follows from the fact that any
transitive subspace of $\mathfrak{M}_4$ has dimension $\geq7$
\cite{DMR}.

\subsection{The extreme form of superactivation}

According to the notations in \cite{C&S}, the extreme form of
superactivation of one-shot zero-error capacity means the existence
of two channels $\Phi_1$ and $\Phi_2$ such that
\begin{equation}\label{ef-sa}
    \bar{C}_0(\Phi_1)=\bar{C}_0(\Phi_2)=0,\quad\textrm{but}\quad
    \bar{Q}_0(\Phi_1\otimes\Phi_2)>0.
\end{equation}
Since $\bar{Q}_0$ is less than or equal to $\bar{C}_0$, the channels
$\Phi_1$ and $\Phi_2$ demonstrate superactivation of both classical
and quantum one-shot zero-error capacities simultaneously, i.e.
(\ref{sa-cc}) and
\begin{equation}\label{sa-qc}
    \bar{Q}_0(\Phi_1)=\bar{Q}_0(\Phi_2)=0,\quad\textrm{but}\quad
    \bar{Q}_0(\Phi_1\otimes\Phi_2)>0.
\end{equation}

In \cite{C&S} a very sophisticated method is used to show the
existence of two channels $\Phi_1$ and $\Phi_2$ of sufficiently high
dimensions ($\dim\H_A=48, \dim\H_E=1140, \dim\H_B=54720$) for which
the extreme form of superactivation of asymptotic zero-error
capacity holds (which means validity of (\ref{ef-sa}) with
$\bar{C}_0$ and $\bar{Q}_0$ replaced by $C_0$ and $Q_0$).

This result directly implies the existence of two channels $\Phi_1$
and $\Phi_2$ for which (\ref{ef-sa}) holds, but it neither gives an
explicit form of these channels, nor says anything about their
minimal dimensions.

We want to fill this gap and present a low-dimensional example of
such channels expressed in terms of their noncommutative
graphs.\smallskip

By Lemmas \ref{trans-l} and \ref{trans-l+} (with Proposition
\ref{cmp}) the problem of finding channels for which (\ref{ef-sa})
holds is reduced to the problem of finding transitive subspaces
$\mathfrak{L}_1\subset\mathfrak{M}_{n_1}$ and
$\mathfrak{L}_2\subset\mathfrak{M}_{n_2}$ satisfying (\ref{L-cond})
such that
\begin{equation}\label{i-r}
\langle \psi|A|\varphi\rangle=0\quad\textup{and}\quad \langle
\varphi|A|\varphi\rangle=\langle \psi|A|\psi\rangle\quad\forall
A\in\mathfrak{L}_1\otimes\mathfrak{L}_2
\end{equation}
for some unit vectors $\varphi$ and $\psi$ in
$\mathbb{C}^{n_1}\otimes\mathbb{C}^{n_2}$.
\smallskip

Let $A\mapsto\widehat{A}$ be the linear isomorphism of
$\mathfrak{M}_4$ corresponding to the Shur multiplication by the
matrix
$$
T = [t_{ij}]=\left[\begin{array}{rrrr}
1 &  1 & -\mathrm{i} &  -\mathrm{i}\\
1 &  1 & -\mathrm{i} &  -\mathrm{i}\\
+\mathrm{i} &  +\mathrm{i} & 1 &  1\\
+\mathrm{i} &  +\mathrm{i} & 1 &  1
\end{array}\right],
$$
i.e. $\{\hat{a}_{ij}\}=\{a_{ij}t_{ij}\}$, and $\mathfrak{L}_0$ the
subspace of $\mathfrak{M}_4$ constructed in Example 3.10 in
\cite{DMR} ($\mathfrak{L}_0$ and $\mathfrak{L}_0^{\bot}$ are
described in Subsection 3.1). Consider the subspaces
$$
\mathfrak{L}_1 = \left\{M_1=\left[\begin{array}{cc} A_1 &
 B_1\\C_1& \widehat{A}_1\end{array}\right],\;\; A_1\in \mathfrak{M},\, B_1, C_1^*\in \mathfrak{L}_0^{\bot}\right\}
$$
and
$$
\mathfrak{L}_2 = \left\{M_2=\left[\begin{array}{cc} \widehat{A}_2 &
 B_2\\ C_2 & A_2 \end{array}\right],\;\; A_2\in \mathfrak{M}, \,B_2, C_2^*\in \mathfrak{L}_0^{\bot}\right\},
$$
of $\mathfrak{M}_8$, where $\mathfrak{M}$ is a subspace of
$\mathfrak{M}_4$ having the properties stated in Lemma \ref{main-l}
below. Since $\dim\mathfrak{L}_0^{\bot}=8$,
$\dim\mathfrak{L}_1=\dim\mathfrak{L}_2=8+8+7=23$.
\medskip

Since $[\widehat{A}\,]^*=\widehat{[A^*]}$ and $\widehat{I}_4=I_4$,
the subspaces $\mathfrak{L}_1$ and $\mathfrak{L}_2$ are symmetric
and contain the unit matrix $I_8$. It is easy to see that they are
transitive (since
$\mathfrak{L}_0^{\bot},[\mathfrak{L}_0^{\bot}]^*,\mathfrak{M}$ and
$\widehat{\mathfrak{M}}\doteq\{\widehat{A}\,|\,A\in\mathfrak{M}\}$
are transitive subspaces of $\mathfrak{M}_4$).\smallskip

\begin{theorem}\label{main}
\emph{There exist unit vectors $\varphi$ and $\psi$ in
$\mathbb{C}^8\otimes\mathbb{C}^8$ such that $(\ref{i-r})$ holds for
the above transitive subspaces $\,\mathfrak{L}_1$ and
$\,\mathfrak{L}_2$ of $\,\mathfrak{M}_8$}.
\end{theorem}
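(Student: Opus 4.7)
The plan is to exploit the block decomposition $\mathbb{C}^8 \otimes \mathbb{C}^8 = \bigoplus_{\alpha,\beta=1}^2 \mathbb{C}^4_\alpha \otimes \mathbb{C}^4_\beta$ induced by the $\mathbb{C}^8 = \mathbb{C}^4 \oplus \mathbb{C}^4$ structure underlying $\mathfrak{L}_1, \mathfrak{L}_2$. Since condition (\ref{i-r}) is linear in $A \in \mathfrak{L}_1 \otimes \mathfrak{L}_2$, it suffices to verify it on rank-1 tensors $A = M_1 \otimes M_2$ with $M_i \in \mathfrak{L}_i$. Writing $M_1 \otimes M_2$ in the refined $4 \times 4$ block matrix form (obtained by further splitting the $2 \times 2$ structure of each factor), each block entry has the form $X \otimes Y$ where $X$ is one of $A_1, B_1, C_1, \widehat{A}_1$ and $Y$ is one of $\widehat{A}_2, B_2, C_2, A_2$.

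The natural ansatz is to take $\varphi \in \mathbb{C}^4_1 \otimes \mathbb{C}^4_1$ and $\psi \in \mathbb{C}^4_2 \otimes \mathbb{C}^4_2$, supported on diagonally opposite super-blocks. With this choice, only three of the sixteen block entries of $M_1 \otimes M_2$ contribute to the matrix elements in (\ref{i-r}), reducing it to two requirements: (a)~$\langle \psi\,|\, C_1 \otimes C_2 \,|\, \varphi \rangle = 0$ for all $C_1, C_2$ with $C_i^* \in \mathfrak{L}_0^{\perp}$, and (b)~$\langle \varphi\,|\, A_1 \otimes \widehat{A}_2 \,|\, \varphi \rangle = \langle \psi\,|\, \widehat{A}_1 \otimes A_2 \,|\, \psi \rangle$ for all $A_1, A_2 \in \mathfrak{M}$.

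For condition~(a), the non-transitivity of $\mathfrak{L}_0 \otimes \mathfrak{L}_0$ from Example~3.10 of \cite{DMR} furnishes a rank-1 operator in $(\mathfrak{L}_0 \otimes \mathfrak{L}_0)^{\perp}$; passing to adjoints (since we need $C_i \in (\mathfrak{L}_0^{\perp})^{*}$ rather than $\mathfrak{L}_0^{\perp}$ directly) yields the Schmidt vectors of a pair $(\varphi, \psi)$ that will kill the cross term. For condition~(b), the Schur multiplier $\widehat{\cdot}$, whose defining matrix $T$ has entries $\pm 1, \pm \mathrm{i}$, suggests that $\psi$ should be obtained from $\varphi$ by a unitary transformation that intertwines $\widehat{\cdot}$ on one factor with the identity on the other. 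Lemma~\ref{main-l} (stated below) presumably characterizes $\mathfrak{M}$ precisely so that this intertwining succeeds---for instance by forcing $\widehat{A} = U A U^{*}$ for a fixed unitary $U$ and every $A \in \mathfrak{M}$, after which swapping tensor legs together with the unitary $U \otimes U$ would carry (b) into a tautology.

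The main obstacle will be arranging for a single pair $(\varphi, \psi)$ to satisfy (a) and (b) simultaneously: the Schmidt structure of $|\varphi\rangle\langle\psi|$ is pinned down by the \cite{DMR} witness, whereas (b) imposes an independent symmetry relating $\varphi$ and $\psi$ through the Schur multiplier. Verifying compatibility should reduce to a concrete finite-dimensional linear-algebra check that exploits the specific entries of the subspace $\mathfrak{L}_0$ from \cite{DMR}, the design of $\mathfrak{M}$ in Lemma~\ref{main-l}, and the fourth-root-of-unity pattern of the matrix $T$.
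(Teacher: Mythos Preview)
Your plan is essentially the paper's proof: the paper also places $\varphi$ and $\psi$ in diagonally opposite $\mathbb{C}^4\otimes\mathbb{C}^4$ super-blocks, reduces (\ref{i-r}) to your conditions (a) and (b), and uses the explicit DMR rank-one witness $|u\rangle\langle v|\in(\mathfrak{L}_0^\perp\otimes\mathfrak{L}_0^\perp)^\perp$ for~(a). Two small corrections to your reasoning are worth recording.

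First, the paper puts $\varphi$ in the $(2,2)$ super-block and $\psi$ in the $(1,1)$ super-block, the reverse of your choice; the cross term in~(a) is then $B_1\otimes B_2$ with $B_i\in\mathfrak{L}_0^\perp$, so no passage to adjoints is needed and the DMR witness applies directly.

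Second, your guess that $\widehat{A}=UAU^*$ is correct and stronger than you think: the Schur matrix $T$ has rank one, $T_{ij}=u_i\bar{u}_j$ with $u=(1,1,\mathrm{i},\mathrm{i})$, so $\widehat{A}=UAU^*$ for \emph{every} $A\in\mathfrak{M}_4$. Thus Lemma~\ref{main-l} plays no role in condition~(b); the subspace $\mathfrak{M}$ is chosen only to lower $\dim\mathfrak{L}_i$ (hence the Choi rank) while keeping $\mathfrak{L}_1,\mathfrak{L}_2$ transitive. The mechanism for~(b) is not a tensor-leg swap but the phase relation $(I\otimes U^*)\psi=-\mathrm{i}\,(U^*\otimes I)\varphi$, which is equivalent to the paper's scalar identity $t_{ij}=s_is_j\,t_{5-i,\,5-j}$ and is verified directly for the DMR vectors $\varphi\propto\sum_i e_i\otimes e_{5-i}$, $\psi\propto\sum_i s_ie_i\otimes e_{5-i}$ with $s=(1,1,-1,-1)$. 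That is exactly the ``concrete linear-algebra check'' you anticipated, and it does go through.
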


\begin{proof} We have to show the existence of two orthogonal unit
vectors $\varphi, \psi$ in
$[\mathbb{C}^4\oplus\mathbb{C}^4]\otimes[\mathbb{C}^4\oplus\mathbb{C}^4]$
such that
\begin{equation}\label{one}
    \langle\psi | M_1\otimes M_2 | \varphi\rangle=0\quad \forall
    M_1\in\mathfrak{L}_1, M_2\in\mathfrak{L}_2
\end{equation}
and
\begin{equation}\label{two}
    \langle\psi | M_1\otimes M_2 |\psi\rangle=\langle\varphi | M_1\otimes M_2 | \varphi\rangle\quad \forall
    M_1\in\mathfrak{L}_1, M_2\in\mathfrak{L}_2.
\end{equation}

Let $|u\rangle=\sum_{i=1}^4 |x_i\rangle\otimes |y_i\rangle$ and
$|v\rangle=\sum_{i=1}^4 s_i |x_i\rangle\otimes |y_i\rangle$ be the
vectors in $\mathbb{C}^4\otimes\mathbb{C}^4$, where
$|x_i\rangle=|e_i\rangle$, $|y_i\rangle=|e_{5-i}\rangle$
($\{|e_i\rangle\}$ is the canonical basis in $\mathbb{C}^4$) and
$s_1=s_2=1, s_3=s_4=-1$. It is shown in \cite{DMR} that
$|u\rangle\langle v|\in[\mathfrak{L}_0^{\bot}\otimes
\mathfrak{L}_0^{\bot}]^{\bot}$, which means that
\begin{equation}\label{one+}
0=\langle v | B_1\otimes B_2 | u\rangle=\sum_{i,j=1}^4 s_i\langle
x_i\otimes y_i  | B_1\otimes B_2 |\, x_j\otimes y_j\rangle\quad
\forall B_1,B_2\in \mathfrak{L}_0^{\bot}.
\end{equation}

Let $|\varphi\rangle=\frac{1}{2}\sum_{i=1}^4 |0, x_i\rangle\otimes
|0, y_i\rangle$ and $|\psi\rangle=\frac{1}{2}\sum_{i=1}^4 s_i
|x_i,0\rangle\otimes |y_i,0\rangle$. Then we have
\begin{equation}\label{int}
M_1\otimes M_2|\varphi\rangle=\frac{1}{2}\sum_{i=1}^4 |B_1 x_i,
\widehat{A}_1x_i\rangle\otimes|B_2 y_i, A_2y_i\rangle
\end{equation}
and hence
$$
\begin{array}{c}
\displaystyle\langle\psi| M_1\otimes M_2 |
\varphi\rangle=\frac{1}{4}\sum_{i,j=1}^4 s_i \langle x_i,0|\otimes
\langle y_i,0 |\cdot| B_1 x_j,
\widehat{A}_1x_j\rangle\otimes| B_2 y_j, A_2y_j\rangle\\
\displaystyle=\frac{1}{4}\sum_{i,j=1}^4 s_i \langle x_i |B_1
|x_j\rangle\langle y_i | B_2 | y_j\rangle=0,\quad
\end{array}
$$
where the last equality follows from (\ref{one+}). Thus (\ref{one})
is valid. It follows from (\ref{int}) that
\begin{equation}\label{two+}
\begin{array}{c}
\displaystyle\langle\varphi| M_1\otimes M_2 |
\varphi\rangle=\frac{1}{4}\sum_{i,j=1}^4 \langle 0,x_i|\otimes
\langle 0, y_i|\cdot|B_1 x_j,
\widehat{A}_1x_j\rangle\otimes|B_2 y_j, A_2y_j\rangle\\
\displaystyle=\frac{1}{4}\sum_{i,j=1}^4 \langle x_i | \widehat{A}_1
| x_j\rangle\langle y_i |A_2 | y_j\rangle=\frac{1}{4}\sum_{i,j=1}^4
t_{ij}a^1_{ij}a^2_{k(i)k(j)},\quad k(i)=5-i,
\end{array}
\end{equation}
where $a^n_{ij}$ are elements of the matrix $A_n, n=1,2$. Since
\begin{equation*}
M_1\otimes M_2 |\psi\rangle=\frac{1}{2}\sum_{i=1}^4 s_i|A_1 x_i,
C_1x_i\rangle\otimes|\widehat{A}_2 y_i, C_2y_i\rangle,
\end{equation*}
we have
\begin{equation*}
\begin{array}{c}
\displaystyle\langle\psi| M_1\otimes M_2 |
\psi\rangle=\frac{1}{4}\sum_{i,j=1}^4 s_i s_j\langle x_i,0|\otimes
\langle y_i,0| \cdot |A_1 x_j,
C_1x_j\rangle\otimes|\widehat{A}_2 y_j, C_2y_j\rangle\\
\displaystyle=\frac{1}{4}\sum_{i,j=1}^4 s_is_j\langle
x_i|A_1|x_j\rangle\langle y_i
|\widehat{A}_2|y_j\rangle=\frac{1}{4}\sum_{i,j=1}^4
s_is_jt_{k(i)k(j)}a^1_{ij}a^2_{k(i)k(j)},\,\; k(i)=5-i.
\end{array}
\end{equation*}
The right hand side of this equality coincides with the right hand
side of (\ref{two+}), since it is easy to verify that
$t_{ij}=s_is_jt_{k(i)k(j)}$. Hence (\ref{two}) is valid.
\end{proof}

\begin{lemma}\label{main-l} \emph{There exists a  transitive
subspace $\,\mathfrak{M}$ of $\,\mathfrak{M}_4$ with
$\dim\mathfrak{M}=7$ satisfying (\ref{L-cond}) such that the
subspace
$\,\widehat{\mathfrak{M}}\doteq\{\widehat{A}\,|\,A\in\mathfrak{M}\}$,
where $A\mapsto\widehat{A}$ is the above-defined isomorphism, is
transitive (and satisfies (\ref{L-cond})).}
\end{lemma}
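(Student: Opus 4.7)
The first step is to simplify the isomorphism $A\mapsto\widehat A = T\circ A$. Notice that $T$ is the rank-one positive matrix $T = vv^{*}$ with $v = (1,1,\mathrm{i},\mathrm{i})^{\top}$: indeed, $v_i\overline{v_j}$ reproduces each entry of $T$ by inspection. Hence $\widehat A = DAD^{*}$ with $D = \mathrm{diag}(1,1,\mathrm{i},\mathrm{i})$ unitary, so the map $A\mapsto\widehat A$ is just unitary conjugation. Such a conjugation preserves dimension, adjoint-closedness, the identity, and transitivity (since $\widehat{\mathfrak M}\varphi = D\bigl(\mathfrak M(D^{*}\varphi)\bigr)$ equals $\mathbb{C}^{4}$ iff $\mathfrak M(D^{*}\varphi) = \mathbb{C}^{4}$). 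Consequently, the lemma reduces to producing \emph{one} $7$-dimensional $*$-closed transitive subspace $\mathfrak{M}\subseteq\mathfrak M_{4}$ containing $I$.

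For this I would take $\mathfrak{M}$ to be the space of \emph{Toeplitz matrices}, i.e., the span of $\{I,S,S^{2},S^{3},S^{*},(S^{*})^{2},(S^{*})^{3}\}$, where $S$ is the shift defined by $Se_j = e_{j-1}$ for $j\geq 2$ and $Se_1 = 0$. These seven operators live on the seven distinct diagonals of $\mathfrak M_{4}$ (indices $0,\pm1,\pm2,\pm3$), so they are linearly independent, giving $\dim\mathfrak M = 2\cdot 4-1 = 7$; $*$-closedness is immediate since $S^{*}\in\mathfrak M$; and $I = S^{0}$ is present, so condition (\ref{L-cond}) holds.

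The remaining point, and the main obstacle, is to verify transitivity. The key formulas are
\[
S^{j}(p_{1},p_{2},p_{3},p_{4}) = (p_{j+1},\ldots,p_{4},0,\ldots,0),\qquad (S^{*})^{j}(p_{1},p_{2},p_{3},p_{4}) = (0,\ldots,0,p_{1},\ldots,p_{4-j}).
\]
Given a nonzero $\psi$, let $k$ be the smallest index with $p_{k}\neq 0$. Then $(S^{*})^{4-k}\psi = p_{k}e_{4}$, which recovers $e_{4}$; next $(S^{*})^{3-k}\psi = p_{k}e_{3}+p_{k+1}e_{4}$ yields $e_{3}$ after subtracting the already recovered $e_{4}$; continuing with smaller powers of $S^{*}$ one extracts $e_{k+1},\ldots,e_{4}$, and $\psi$ itself then yields $e_{k}$. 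The vectors $e_{1},\ldots,e_{k-1}$ (when $k\geq 2$) are recovered analogously from $S\psi,S^{2}\psi,S^{3}\psi$, which is the mirror of the $(S^{*})^{j}$ cascade. A brief case split on $k\in\{1,2,3,4\}$ makes this a routine calculation, and shows $\mathfrak M\psi = \mathbb{C}^{4}$ for every nonzero $\psi$; combined with the opening reduction, this proves the lemma for both $\mathfrak M$ and $\widehat{\mathfrak M}$ simultaneously.
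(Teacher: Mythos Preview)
Your proof is correct, and it takes a genuinely different---and simpler---route than the paper's.

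The decisive observation you make, which the paper does not exploit, is that the Schur multiplier $T$ is rank one positive: $T=vv^{*}$ with $v=(1,1,\mathrm{i},\mathrm{i})^{\top}$, so that $\widehat{A}=DAD^{*}$ for the unitary $D=\mathrm{diag}(v)$. Once this is noticed, transitivity, $*$-closedness, and containing $I$ are all automatically inherited by $\widehat{\mathfrak M}$, and the lemma collapses to exhibiting a single $7$-dimensional transitive $*$-closed unital subspace of $\mathfrak M_{4}$; your choice of the Toeplitz matrices is clean and the triangular cascade with $S^{j}$ and $(S^{*})^{j}$ is a valid transitivity argument (in the ``mirror'' step you are using that $S^{j}\psi=p_{k}e_{k-j}+\sum_{i\geq k-j+1}(\cdots)e_{i}$, with all the $e_{i}$ in the tail already recovered, so $e_{k-j}$ drops out for $j=1,\dots,k-1$). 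The paper instead constructs $\mathfrak M$ indirectly as $\mathfrak N^{\perp}$ for an explicit $9$-dimensional symmetric traceless subspace $\mathfrak N$ with no rank-one elements, invoking the Azoff duality between ``no rank-one in $\mathfrak N$'' and ``$\mathfrak N^{\perp}$ transitive''; it then separately verifies that $\widehat{\mathfrak N}$ also has no rank-one elements and that $\mathrm{Tr}\,\widehat A\,\widehat B=\mathrm{Tr}\,AB$. Your reduction via unitary conjugation makes those last two checks unnecessary; what the paper's approach buys is a concrete description of $\mathfrak M^{\perp}$, which can be useful elsewhere, but for the lemma as stated your argument is the more economical one.
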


\begin{proof} The proof below is essentially based on the
arguments from the proof of Theorem 1.2 in \cite{DMR}.

Consider the subspace $\mathfrak{N}\subset\mathfrak{M}_4$ consisting
of the matrices
$$
\left[\begin{array}{cccc}
a+b+c &  f+g & i &  0\\
d+e &  -a & 2f+g   &  i\\
h &  2d+e & -b &  3f+g \\
0 &  h & 3d+e &  -c
\end{array}\right],
$$
where $\,a,b,c,d,e,f,g,h,i\,$ are complex numbers.\smallskip

This subspace does not contain 1-rank matrices. Indeed, a non-zero
matrix $N$ of $\mathfrak{N}$ is non-zero on some diagonal. Consider
the square submatrix containing the shortest non-zero diagonal of
$N$ as its main diagonal. This submatrix is triangular, and hence
its rank is not less than the rank of its diagonal, which is at
least $2$. Hence $\rank N\geq2$.

Let $\mathfrak{M}=\mathfrak{N}^{\perp}\doteq\{A\,|\,\Tr
AB=0\;\forall B\in\mathfrak{N}\}$. Since the subspace $\mathfrak{N}$
is symmetric and consists of traceless matrices of rank $\neq 1$,
$\mathfrak{M}$ is a symmetric transitive subspace containing the
unit matrix. Since $\dim\mathfrak{N}=9$, $\dim\mathfrak{M}=16-9=7$.

To complete the proof it suffices to show that the subspace
$\widehat{\mathfrak{M}}$ is transitive. This can be done by checking
that $\Tr\widehat{A}\widehat{B}=\Tr AB$ for any
$A,B\in\mathfrak{M}_4$, which implies
$\widehat{\mathfrak{M}}=[\widehat{\mathfrak{N}}]^{\perp}$, and by
verifying that the subspace $\widehat{\mathfrak{N}}$ does not
contain 1-rank matrices (in the same way as for $\mathfrak{N}$).
\end{proof}

Theorem \ref{main} and Proposition \ref{cmp} immediately imply the
following result.\smallskip

\begin{corollary}\label{main-c}
\emph{There exists a pair of pseudo-diagonal channels
$\;\Phi_i:\T(\H_{A_i})\rightarrow\T(\H_{B_i})$ with
$\dim\H_{A_i}=8$, $\dim\H_{E_i}=5$ and  $\dim\H_{B_i}\leq40$,
$i=1,2$, for which extreme superactivation (\ref{ef-sa}) holds.}
\end{corollary}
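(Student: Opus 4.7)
The plan is to obtain the two channels by applying Proposition \ref{cmp} separately to the subspaces $\mathfrak{L}_1,\mathfrak{L}_2\subseteq\mathfrak{M}_8$ constructed just before Theorem \ref{main}. Both subspaces are symmetric and contain the unit matrix $I_8$ (as noted in that construction, since $\widehat{I}_4=I_4$ and $[\widehat{A}]^*=\widehat{A^*}$), so hypothesis (i) of Proposition \ref{cmp} is satisfied with $n=8$ and $d=\dim\mathfrak{L}_i=23$. The smallest $m$ with $m^2\geq 23$ is $m=5$, hence implication (i)$\Rightarrow$(iii) of Proposition \ref{cmp} (together with Corollary \ref{cmp-c}) yields pseudo-diagonal channels $\Phi_i:\T(\H_{A_i})\to\T(\H_{B_i})$ with $\dim\H_{A_i}=8$, $\dim\H_{E_i}=5$, $\dim\H_{B_i}\leq mn=40$, and $\mathcal{G}(\Phi_i)=\mathfrak{L}_i$, which already fixes the dimensional data in the statement.

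Next I verify $\bar{C}_0(\Phi_i)=0$ for $i=1,2$. By Lemma \ref{trans-l}, this is equivalent to the transitivity of $\mathcal{G}(\Phi_i)=\mathfrak{L}_i$, which is precisely what was observed in the paragraph preceding Theorem \ref{main} (using transitivity of $\mathfrak{L}_0^{\bot}$, $[\mathfrak{L}_0^{\bot}]^*$, $\mathfrak{M}$, and $\widehat{\mathfrak{M}}$ in $\mathfrak{M}_4$, which in turn follows from Lemma \ref{main-l}). Hence both one-shot zero-error classical capacities vanish.

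To obtain $\bar{Q}_0(\Phi_1\otimes\Phi_2)>0$, the key point is the multiplicativity of the noncommutative graph under tensor products: tensoring Stinespring isometries (\ref{Stinespring-rep}) for $\Phi_1$ and $\Phi_2$ gives a Stinespring isometry for $\Phi_1\otimes\Phi_2$, so by the uniqueness up to isometric equivalence of the complementary channel one has $\widehat{\Phi_1\otimes\Phi_2}\simeq\widehat{\Phi}_1\otimes\widehat{\Phi}_2$, and therefore
\[
\mathcal{G}(\Phi_1\otimes\Phi_2)=(\widehat{\Phi}_1\otimes\widehat{\Phi}_2)^{*}(\B(\H_{E_1}\otimes\H_{E_2}))=\mathcal{G}(\Phi_1)\otimes\mathcal{G}(\Phi_2)=\mathfrak{L}_1\otimes\mathfrak{L}_2.
\]
By Theorem \ref{main} there exist unit vectors $\varphi,\psi\in\mathbb{C}^8\otimes\mathbb{C}^8$ satisfying (\ref{i-r}) on $\mathfrak{L}_1\otimes\mathfrak{L}_2$, which is exactly condition (\ref{operators+}) for the noncommutative graph of $\Phi_1\otimes\Phi_2$. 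Lemma \ref{trans-l+} then gives $\bar{Q}_0(\Phi_1\otimes\Phi_2)>0$, completing the verification of (\ref{ef-sa}).

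I do not expect any genuine obstacle here, since the heavy lifting is already contained in Theorem \ref{main} (the existence of the specific $\varphi,\psi$) and in the structural correspondence between symmetric unital subspaces and (pseudo-diagonal) channels provided by Proposition \ref{cmp}. The only subtle point worth spelling out is the identity $\mathcal{G}(\Phi_1\otimes\Phi_2)=\mathcal{G}(\Phi_1)\otimes\mathcal{G}(\Phi_2)$; once this is in hand, the corollary is a one-line combination of Theorem \ref{main}, Lemma \ref{trans-l}, Lemma \ref{trans-l+}, and Proposition \ref{cmp}.
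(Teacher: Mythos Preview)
Your proposal is correct and follows essentially the same approach as the paper, which simply states that the corollary is an immediate consequence of Theorem \ref{main} and Proposition \ref{cmp}. You have spelled out the details that the paper leaves implicit, including the computation $\dim\mathfrak{L}_i=23$ giving $m=5$, the use of Lemmas \ref{trans-l} and \ref{trans-l+}, and the identity $\mathcal{G}(\Phi_1\otimes\Phi_2)=\mathcal{G}(\Phi_1)\otimes\mathcal{G}(\Phi_2)$.
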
\smallskip

By using Corollary \ref{cmp-c} one can obtain explicit expressions
for channels $\Phi_1$ and $\Phi_2$ having the properties stated in
Corollary \ref{main-c}.

Since the subspaces $\mathfrak{L}_1$ and $\mathfrak{L}_2$ are not
unitary equivalent, the above example of extreme superactivation is
essentially nonsymmetric: $\Phi_1\neq\Phi_2$. But they can be used
to construct a symmetric example by applying the direct sum
construction (see the proof of Theorem 1 in \cite{Duan}).
\smallskip

\begin{corollary}\label{main-c++}
\emph{There exists a quantum channel
$\,\Phi:\T(\H_{A})\rightarrow\T(\H_{B})$ with $\,\dim\H_{A}=16$,
$\dim\H_{E}=10$ and  $\,\dim\H_{B}\leq40$, for which the
following symmetric form of the extreme superactivation holds:}
\begin{equation*}
    \bar{C}_0(\Phi)=0,\quad\textit{but}\quad
    \bar{Q}_0(\Phi\otimes\Phi)>0.
\end{equation*}
\end{corollary}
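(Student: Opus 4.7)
The plan is to apply the direct sum construction from the proof of Theorem 1 in \cite{Duan} to the pair of channels $(\Phi_1,\Phi_2)$ from Corollary \ref{main-c}. Let $V_i:\H_{A_i}\to\H_{B_i}\otimes\H_{E_i}$ be a minimal Stinespring isometry of $\Phi_i$ with Kraus operators $\{V_{i,k}\}_{k=1}^{5}$. Put $\H_A=\H_{A_1}\oplus\H_{A_2}$, $\H_E=\H_{E_1}\oplus\H_{E_2}$, choose a Hilbert space $\H_B$ of dimension at most $40$ together with isometric embeddings $W_i:\H_{B_i}\hookrightarrow\H_B$, and define the channel $\Phi:\T(\H_A)\to\T(\H_B)$ via the Stinespring isometry
$$V(x_1\oplus x_2)=(W_1\otimes\iota_1)V_1 x_1+(W_2\otimes\iota_2)V_2 x_2,$$
where $\iota_i:\H_{E_i}\hookrightarrow\H_E$ are the canonical inclusions. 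By construction $\dim\H_A=16$, $\dim\H_E=10$, and $\dim\H_B\leq 40$, matching the claim.

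The positivity of $\bar Q_0(\Phi\otimes\Phi)$ is inherited from Corollary \ref{main-c}. Since $(W_i\otimes\iota_i)V_i$ is $V_i$ composed with an isometry on the output and environment, the restriction of $\Phi$ to states supported on $\H_{A_i}\subset\H_A$ is unitarily equivalent to $\Phi_i$. Consequently the restriction of $\Phi\otimes\Phi$ to states supported on $\H_{A_1}\otimes\H_{A_2}\subset\H_A\otimes\H_A$ is unitarily equivalent to $\Phi_1\otimes\Phi_2$, and the witnessing unit vectors $\varphi,\psi$ constructed in the proof of Theorem \ref{main} (viewed inside $\H_A\otimes\H_A$) still satisfy (\ref{operators+}) for $\Phi\otimes\Phi$. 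Lemma \ref{trans-l+} then gives $\bar Q_0(\Phi\otimes\Phi)>0$.

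The main obstacle is to verify $\bar C_0(\Phi)=0$: by Lemma \ref{trans-l} this amounts to transitivity of the noncommutative graph $\mathcal{G}(\Phi)$. A direct calculation with the Kraus operators of $V$ yields the block form
$$\mathcal{G}(\Phi)=\begin{pmatrix}\mathfrak{L}_1 & \mathfrak{X}\\ \mathfrak{X}^* & \mathfrak{L}_2\end{pmatrix}\subset\mathfrak{M}_{16},\qquad \mathfrak{X}=\mathrm{span}\{V_{1,k}^* M V_{2,l}\}_{k,l=1}^{5},$$
with $M=W_1^* W_2\in\B(\H_{B_2},\H_{B_1})$. The diagonal blocks $\mathfrak{L}_i=\mathcal{G}(\Phi_i)$ are transitive by the construction preceding Theorem \ref{main}. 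It remains to arrange that both off-diagonal spans $\mathfrak{X}$ and $\mathfrak{X}^*$ are transitive subspaces of $\B(\H_{A_2},\H_{A_1})$ and $\B(\H_{A_1},\H_{A_2})$; this is precisely the point at which Duan's inheritance argument is invoked, and it can be ensured by a generic choice of the embeddings $W_1,W_2$: for generic $M$ the span $\mathfrak{X}$ attains its maximal dimension $\dim\H_{E_1}\cdot\dim\H_{E_2}=25$, comfortably above the threshold $\dim\H_{A_1}+\dim\H_{A_2}-1=15$ required for transitivity of a subspace of $\B(\H_{A_2},\H_{A_1})$, and transitivity is an open condition on the resulting Grassmannian (its failure is detected by the presence of a rank-one matrix in the annihilator, cf.\ Azoff's criterion used in Lemma \ref{trans-l}). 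Transitivity of $\mathcal{G}(\Phi)$ then follows by case analysis on a non-zero $v=v_1\oplus v_2\in\H_A$: when both $v_i\neq 0$ the diagonal blocks already surject on each component; when only one $v_i$ is nonzero, the diagonal block covers its own component while the off-diagonal $\mathfrak{X}$ or $\mathfrak{X}^*$ covers the complementary one.
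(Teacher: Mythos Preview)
Your plan matches the paper's one-line proof: apply Duan's direct-sum construction to the pair $(\Phi_1,\Phi_2)$ of Corollary~\ref{main-c}. The paper supplies no further details, so your attempt to make the construction explicit is more than the paper does. Your argument for $\bar Q_0(\Phi\otimes\Phi)>0$ via the restriction to $\H_{A_1}\otimes\H_{A_2}$ is correct, and your Stinespring direct sum does produce a channel with the stated input and environment dimensions.

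There is, however, a genuine gap in your argument for $\bar C_0(\Phi)=0$. You correctly reduce it to transitivity of the off-diagonal span $\mathfrak{X}=\mathrm{span}\{V_{1,k}^{*}MV_{2,l}\}$ (and of $\mathfrak{X}^{*}$), but the genericity claim is not justified: the facts that transitivity is Zariski-open on the Grassmannian and that $\dim\mathfrak{X}$ can reach $25>15$ do \emph{not} imply that the particular family $\{\mathfrak{X}(M)\}_M$ meets the transitive locus. The image of $M\mapsto\mathfrak{X}(M)$ is a constrained subvariety that could a priori sit entirely inside the non-transitive set. To close the gap you would need either an explicit $M$ (together with explicit Kraus realizations of $\mathfrak{L}_1,\mathfrak{L}_2$) for which $\mathfrak{X}$ and $\mathfrak{X}^{*}$ are transitive, or a dimension count showing that the locus of bad $M$ --- those admitting nonzero $a,b$ with $\supp\Phi_1(|b\rangle\langle b|)\perp M\,\supp\Phi_2(|a\rangle\langle a|)$ --- is a proper closed subset. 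Duan's published construction sidesteps this by prescribing the noncommutative graph directly (via the analogue of Proposition~\ref{cmp}) with \emph{full} off-diagonal blocks, so transitivity is immediate; but that yields $\dim\mathcal{G}(\Phi)=23+23+2\cdot64=174$ and hence Choi rank at least $14$, not $10$. So either your sharper Stinespring-level construction is what is needed to reach $\dim\H_E=10$ (in which case the gap must actually be filled), or the paper's environment dimension should be read as a heuristic rather than a verified bound.
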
\smallskip
This means that the channel $\Phi$ has vanishing one-shot classical
zero-error capacity but positive two-shot quantum zero-error
capacity.

\section{On channels which cannot be superactivated}

J.Park and S.Lee showed in \cite{P&L} that superactivation of
one-shot zero-error classical capacity (\ref{sa-cc}) does not hold
if either $\Phi_1$ or $\Phi_2$ is a qubit channel.\footnote{In fact,
one can prove that superactivation of one-shot zero-error classical
capacity (\ref{sa-cc}) does not hold if either $\Phi_1$ or $\Phi_2$
has input dimension $\leq 3$ \cite{D-pc}.} Now we will show how to
substantially extend this observation by using some results from
\cite{DMR} and \cite{M&S}, in particular, the following lemma (which
is a reformulation of Corollary 6.13 in \cite{DMR}).
\smallskip

\begin{lemma}\label{b-c} \emph{Let $\,\mathfrak{L}_1$ be a transitive subspace of $\,\B(\H_1)$ which is contained
in the weak-operator-topology closed linear span of its 1-rank
elements. Then the spatial tensor product $\,\mathfrak{L}_1\otimes
\mathfrak{L}_2$ is a transitive subspace of $\,\B(\H_1\otimes\H_2)$
for any transitive subspace $\,\mathfrak{L}_2$ of $\,\B(\H_2)$.}
\end{lemma}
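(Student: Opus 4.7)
The plan is to show that for every nonzero $\xi\in\H_1\otimes\H_2$ the set $(\mathfrak{L}_1\otimes\mathfrak{L}_2)\xi$ is norm-dense in $\H_1\otimes\H_2$. Since this set is a linear subspace and elementary tensors are total, it is enough to approximate $u\otimes w$ for arbitrary $u\in\H_1$ and $w\in\H_2$. The key computation is that for any rank-$1$ element $|u\rangle\langle v|\in\mathfrak{L}_1$ and any $Y\in\mathfrak{L}_2$,
\[
(|u\rangle\langle v|\otimes Y)\xi=u\otimes Y\pi_v(\xi),
\]
where $\pi_v(\xi):=(\langle v|\otimes I_{\H_2})\xi\in\H_2$. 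Say that $v$ is \emph{good} if $\pi_v(\xi)\neq 0$: for such $v$, transitivity of $\mathfrak{L}_2$ makes $\{Y\pi_v(\xi):Y\in\mathfrak{L}_2\}$ dense in $\H_2$, so the closure of the orbit contains $\{u\}\otimes\H_2$ for each $u$ arising as the range vector of a rank-$1$ element $|u\rangle\langle v|\in\mathfrak{L}_1$ with good $v$. Let $T\subseteq\H_1$ denote the linear span of all such $u$; the problem then reduces to showing that $T$ is norm-dense in $\H_1$.

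To prove density of $T$, I would fix a nonzero vector $a=a_1$ coming from a Schmidt decomposition $\xi=\sum_i a_i\otimes b_i$ and note that $v$ is bad precisely when $v\perp\mathrm{span}\{a_i\}$, whence in particular $\langle v|a\rangle=0$. Now take any $u_*\in T^\perp$. For each rank-$1$ $|u\rangle\langle v|\in\mathfrak{L}_1$, either $v$ is good (so $u\in T$ and $\langle u_*|u\rangle=0$) or $v$ is bad (so $\langle v|a\rangle=0$), and in both cases
\[
\langle u_*|(|u\rangle\langle v|)|a\rangle=\langle u_*|u\rangle\,\langle v|a\rangle=0.
\]
By linearity, $\langle u_*|X|a\rangle=0$ for every $X$ in the algebraic span of the rank-$1$ elements of $\mathfrak{L}_1$. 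Using the hypothesis that $\mathfrak{L}_1$ lies in the WOT-closed linear span of those elements, combined with the standard coincidence of WOT- and SOT-closures on convex sets, this identity extends by continuity to every $X\in\mathfrak{L}_1$. Since $\mathfrak{L}_1 a$ is norm-dense in $\H_1$ by transitivity of $\mathfrak{L}_1$, this forces $u_*=0$, and hence $T$ is dense.

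The main obstacle is the infinite-dimensional topological step just invoked, namely the passage from the WOT-spanning hypothesis on $\mathfrak{L}_1$ to the pointwise norm approximation $X_\alpha a\to Xa$ needed to extend the vanishing of the functional $X\mapsto\langle u_*|X|a\rangle$ from the span of rank-$1$ elements to all of $\mathfrak{L}_1$; in finite dimensions this is vacuous. Once $T$ is shown to be dense in $\H_1$, the inclusion $T\otimes\H_2\subseteq\overline{(\mathfrak{L}_1\otimes\mathfrak{L}_2)\xi}$ together with the density of $T\otimes\H_2$ in $\H_1\otimes\H_2$ yields $\overline{(\mathfrak{L}_1\otimes\mathfrak{L}_2)\xi}=\H_1\otimes\H_2$, which is the desired transitivity.
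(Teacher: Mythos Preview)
The paper does not supply its own proof of this lemma; it is stated as ``a reformulation of Corollary 6.13 in \cite{DMR}'' and left at that. So there is no in-paper argument to compare against, only the external reference.

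Your argument is correct and is in fact the natural direct proof. The only point worth flagging is that the ``main obstacle'' you single out is not an obstacle at all. You do not need any pointwise norm approximation $X_\alpha a\to Xa$, nor the WOT/SOT coincidence on convex sets. The functional
\[
X\longmapsto \langle u_*\,|\,X\,|\,a\rangle
\]
is weak-operator-topology continuous by the very definition of the weak operator topology (WOT convergence is precisely convergence of all such vector functionals). Hence if it vanishes on the rank-$1$ elements of $\mathfrak{L}_1$, it vanishes on their linear span and therefore on the WOT-closure of that span, which by hypothesis contains $\mathfrak{L}_1$. Transitivity of $\mathfrak{L}_1$ then gives $u_*\perp\overline{\mathfrak{L}_1 a}=\H_1$, so $u_*=0$ and $T$ is dense. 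The remainder of your argument (that $\overline{(\mathfrak{L}_1\otimes\mathfrak{L}_2)\xi}$, being a closed linear subspace containing $\{u\}\otimes\H_2$ for a dense set of $u$, equals $\H_1\otimes\H_2$) is fine as written.
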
\smallskip

This observation is a strengthened infinite-dimensional version of
the well known fact that the tensor product of any two unextendible
product base is an unextendible product base \cite{Shor&Co}.
\smallskip

\begin{property}\label{non-sa} \emph{Superactivation (\ref{sa-cc}) of one-shot zero-error classical capacity does not
hold for two channels
$\,\Phi_i:\T(\H_{A_i})\rightarrow\T(\H_{B_i})$, $i=1,2$  if the
channel $\,\Phi_1$ satisfies one of the following conditions (in
which $\,\mathcal{G}(\Phi_1)\doteq\widehat{\Phi}^*_1(\B(\H_{E_1}))$
is the non-commutative graph of $\,\Phi_1$):}
\begin{enumerate}[A)]
    \item $\dim\mathcal{G}(\Phi_1)\geq[\dim\H_{A_1}]^2-1\;$ ($\,\dim\H_{A_1}<+\infty$);
    \item \emph{$\dim\H_{A_1}=2$, in particular, $\Phi_1$ is a qubit channel};
    \item \emph{$\mathcal{G}(\Phi_1)$ is an algebra;}
    \item \emph{$\Phi_1$ is a Bosonic Gaussian channel (described in Example 1);}
    \item \emph{$\Phi_1$ is a finite-dimensional entanglement-breaking channel;}
    \item \emph{$\Phi_1$ is an entanglement-breaking channel having Kraus representation (\ref{Kraus-rep}) such that
    $\,\rank V_k=1$ for all $\,k$,}\footnote{This means that $\Phi\otimes\mathrm{Id}_{\K}(\omega)$
    is a countably-decomposable separable state in $\S(\H_B\otimes\K)$ for any state $\omega\in\S(\H_A\otimes\K)$, see Remark \ref{non-sa-r} below.}
    \end{enumerate}
\emph{and the channel $\,\Phi_2$ is arbitrary.}
\end{property}

\begin{proof} A) If $\,\mathcal{G}(\Phi_1)=\B(\H_{A_1})$  then this assertion follows from assertion C.
If $\,\dim\mathcal{G}(\Phi_1)= [\dim\H_{A_1}]^2-1\,$ then
$\,\dim\ker \widehat{\Phi}_1=1$. If the one-shot zero-error
classical capacity of the channel $\Phi_1$ is zero then, by Lemma
\ref{rank1}, the minimal rank of all nonzero operators in
$\ker\widehat{\Phi}_1 $ is not less than $2$. By \cite[Theorem
1.1]{M&S} this implies that the subspace $\ker\widehat{\Phi}_1$ is
reflexive, which means that
$\mathcal{G}(\Phi_1)=[\ker\widehat{\Phi}_1]^{\bot}$ is spanned by
its one rank elements \cite[Claim 3.1]{M&S}.

If $\Phi_2$ is an arbitrary channel with zero one-shot zero-error
classical capacity then $\mathcal{G}(\Phi_2)$ is a  transitive
subspace (by Lemma \ref{trans-l}). Lemma \ref{b-c} shows that
$\mathcal{G}(\Phi_1\otimes\Phi_2)=\mathcal{G}(\Phi_1)\otimes\mathcal{G}(\Phi_2)$
is a transitive subspace and hence the one-shot zero-error classical
capacity of the channel $\Phi_1\otimes\Phi_2$ is zero (by Lemma
\ref{trans-l}).\smallskip

B) If $\dim\H_{A_1}=2$ and  $\bar{C}_0(\Phi_1)=0$  then, by Lemma
\ref{rank1}, the all nonzero operators in $\ker\widehat{\Phi}_1$
have rank $=2$, i.e they are invertible. This implies that
$\dim\ker\widehat{\Phi}_1\leq1$. Indeed, if  $T, S$ are invertible
operators in $\ker\widehat{\Phi}_1$ and $\lambda$ is an eigenvalue
of the operator
 $TS^{-1}$ then $$T-\lambda S = (TS^{-1}-\lambda)S$$ is a
 non-invertible operator in $\ker\widehat{\Phi}_1$ and hence  $T=\lambda S$. So, this assertion
follows from the previous one.\smallskip

C) If $\mathcal{G}(\Phi_1)$ is an algebra and
$\,\bar{C}_0(\Phi_1)=0\,$ then Proposition \ref{com} and the basic
results of the von Neumann algebras theory (cf.\cite{K&R}) imply
that $\mathcal{G}(\Phi_1)$ is dense in $\B(\H_{A_1})$ in the
weak-operator topology.  Hence to prove that
$\,\bar{C}_0(\Phi_1\otimes\Phi_2)=0\,$ for any channel $\Phi_2$ with
$\,\bar{C}_0(\Phi_2)=0\,$ it suffices, by Lemma \ref{trans-l}, to
show transitivity of the subspace
$\,\B(\H_{A_1})\otimes\mathfrak{L}\,$ for any transitive subspace
$\mathfrak{L}$ of $\B(\H_{A_2})$.

This assertion is obvious if $\,n=\dim\H_{A_1}<+\infty$, since in
this case the subspace $\B(\H_{A_1})\otimes\mathfrak{L}$ can be
identified with the subspace of all $\,n\times n\,$  matrices with
entries in $\mathfrak{L}$ (considered as operators in
$\bigoplus_{k=1}^n \H_k$, where $\H_k$ is a copy of $\H_{A_2}$ for
all $k$).

Assume that $\dim\H_{A_1}=+\infty$ and there is a vector
$|\varphi\rangle=\sum_{i=1}^{+\infty}c_i|e_i\otimes f_i\rangle$ in
$\H_{A_1}\otimes\H_{A_2}$ (where $c_1\neq0$, $\{|e_i\rangle\}$ and
$\{|f_i\rangle\}$ are orthonormal base in $\H_{A_1}$ and in
$\H_{A_2}$) such that all the vectors $C|\varphi\rangle$,
$C\in\B(\H_{A_1})\otimes\mathfrak{L}$, belong to a proper subspace
$\K$ of $\H_{A_1}\otimes\H_{A_2}$.  Let $\H_n$ be the subspace of
$\H_{A_1}$ spanned by the vectors $|e_1\rangle,\ldots,|e_n\rangle$
and $\,|\varphi_n\rangle=\sum_{i=1}^{n}c_i|e_i\otimes f_i\rangle$.
By the above observation the set
$\,\{\,C|\varphi_n\rangle\,|\,C\in\B(\H_n)\otimes\mathfrak{L}\,\}\,$
is dense in $\H_n\otimes\H_{A_2}$. But it is easy to see that
$$
C|\varphi_n\rangle=C|\varphi\rangle
$$
for any $C\in\B(\H_n)\otimes\mathfrak{L}$. Since
$\B(\H_n)\otimes\mathfrak{L}\subseteq\B(\H_{A_1})\otimes\mathfrak{L}$
this implies $\H_n\otimes\H_{A_2}\subseteq\K\,$ for any $n$, that is
a contradiction. \smallskip

D) This assertion follows from the previous one, since the
noncommutative graph of a Bosonic Gaussian channel is an algebra
(see Example \ref{gauss}).\smallskip

E) If $\Phi_1$ is a finite-dimensional entanglement-breaking channel
then it has Kraus representation (\ref{Kraus-rep}) such that
$\,\rank V_k=1$ for all $\,k$ \cite{e-b-ch}. So, this assertion
follows from assertion F.

F) In this case the noncommutative graph
$\,\mathcal{G}(\Phi_1)\doteq\widehat{\Phi}^*_1(\B(\H_{E_1}))$ is
spanned by the 1-rank operators $V_k^*V_l$ (this follows from
expression (\ref{Kraus-rep-c})).  So, this assertion follows from
Lemmas \ref{trans-l} and  \ref{b-c}.
 \end{proof}

Proposition \ref{non-sa} directly implies the following two
observations.\smallskip

\begin{corollary}\label{non-sa-c-1} \emph{If a quantum channel $\,\Phi$ satisfies one of  conditions A-F
from Proposition \ref{non-sa} then $\,C_0(\Phi)=0$ if and only if
$\,\bar{C}_0(\Phi)=0$.}
\end{corollary}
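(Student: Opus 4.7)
The plan is to reduce this to a straightforward induction on $n$, using Proposition \ref{non-sa} as the engine. One direction is immediate: since $\bar{C}_0(\Phi)=\bar{C}_0(\Phi^{\otimes 1})\leq \sup_n n^{-1}\bar{C}_0(\Phi^{\otimes n})=C_0(\Phi)$, the vanishing of $C_0(\Phi)$ forces $\bar{C}_0(\Phi)=0$.

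For the nontrivial direction, assume $\bar{C}_0(\Phi)=0$ and $\Phi$ satisfies one of conditions A--F. I would show by induction on $n\geq 1$ that $\bar{C}_0(\Phi^{\otimes n})=0$. The base case $n=1$ is the hypothesis. For the inductive step, suppose $\bar{C}_0(\Phi^{\otimes(n-1)})=0$. Apply Proposition \ref{non-sa} with $\Phi_1=\Phi$ (which by assumption satisfies the requisite condition A--F) and $\Phi_2=\Phi^{\otimes(n-1)}$ (which is an arbitrary channel, here with $\bar{C}_0(\Phi_2)=0$ by the inductive hypothesis). The conclusion of Proposition \ref{non-sa} yields $\bar{C}_0(\Phi^{\otimes n})=\bar{C}_0(\Phi\otimes\Phi^{\otimes(n-1)})=0$, completing the induction.

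Having $\bar{C}_0(\Phi^{\otimes n})=0$ for every $n$, the definition $C_0(\Phi)=\sup_n n^{-1}\bar{C}_0(\Phi^{\otimes n})$ gives $C_0(\Phi)=0$.

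There is essentially no obstacle in this argument: the key point (and the reason the statement works) is the asymmetry in Proposition \ref{non-sa}, namely that only one of the two tensor factors needs to satisfy conditions A--F while the other may be arbitrary. This is precisely what permits iterating the proposition along the tensor powers $\Phi^{\otimes(n-1)}$ without requiring the tensor powers themselves to satisfy any of A--F.
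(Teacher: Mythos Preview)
Your argument is correct and is precisely the unfolding of what the paper means by saying the corollary follows ``directly'' from Proposition~\ref{non-sa}: the asymmetric form of the proposition (only $\Phi_1$ needs to satisfy A--F) is exactly what allows the induction over tensor powers. There is no substantive difference from the paper's approach.
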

\medskip

\begin{corollary}\label{non-sa-c-2}
\emph{Superactivation of asymptotic classical zero-error capacity
(property (\ref{sa-cc}) with $\bar{C}_0$ replaced by $C_0$) does not
hold for channels $\,\Phi_1$ and $\,\Phi_2$, if $\,\Phi_1$ satisfies
one of conditions A-F from Proposition \ref{non-sa} and $\,\Phi_2$
is arbitrary.}\end{corollary}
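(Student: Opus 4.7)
The plan is to deduce the asymptotic statement from Proposition \ref{non-sa} applied at every finite tensor level. By the regularization definition of $C_0$, the hypothesis $C_0(\Phi_1)=C_0(\Phi_2)=0$ is equivalent to $\bar{C}_0(\Phi_i^{\otimes n})=0$ for $i=1,2$ and all $n\in\mathbb{N}$, and the target $C_0(\Phi_1\otimes\Phi_2)=0$ amounts, after the obvious unitary reshuffling of tensor factors (which preserves capacities), to $\bar{C}_0(\Phi_1^{\otimes n}\otimes\Phi_2^{\otimes n})=0$ for every $n$. So I would fix $n$ and try to apply Proposition \ref{non-sa} with the restricted channel $\Phi_1^{\otimes n}$ and the arbitrary channel $\Phi_2^{\otimes n}$, both of which have vanishing one-shot zero-error classical capacity by assumption.

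The crux is then to verify that $\Phi_1^{\otimes n}$ still satisfies some condition from the list A--F. Using representation (\ref{Kraus-rep-c}) one reads off $\mathcal{G}(\Phi_1^{\otimes n})=\mathcal{G}(\Phi_1)^{\otimes n}$, and this immediately makes four of the six conditions hereditary: condition C because a tensor product of unital $*$-algebras is a unital $*$-algebra; condition D because the tensor product of Bosonic Gaussian channels is Gaussian (as noted in Example \ref{gauss}); condition E because a tensor product of finite-dimensional entanglement-breaking channels is again finite-dimensional and entanglement-breaking; and condition F because if every Kraus operator $V_k$ of $\Phi_1$ has rank $1$, then so does every Kraus operator $V_{k_1}\otimes\cdots\otimes V_{k_n}$ of $\Phi_1^{\otimes n}$, the entanglement-breaking property being preserved as in E.

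The remaining conditions A and B present the one genuine obstacle, since neither the identity $\dim\H_{A_1}=2$ nor the dimensional inequality defining A is preserved under tensoring. My way out is to open up the proof of Proposition \ref{non-sa} and extract the property actually used in cases A and B to drive the argument through Lemma \ref{b-c}: namely, that under these conditions together with $\bar{C}_0(\Phi_1)=0$ the noncommutative graph $\mathcal{G}(\Phi_1)$ is linearly spanned by its rank-$1$ elements. This property visibly passes to tensor powers, since any rank-$1$ spanning family $\{x_k y_k^{*}\}$ for $\mathcal{G}(\Phi_1)$ yields the rank-$1$ spanning family $\{(x_{k_1}\otimes\cdots\otimes x_{k_n})(y_{k_1}\otimes\cdots\otimes y_{k_n})^{*}\}$ for $\mathcal{G}(\Phi_1)^{\otimes n}=\mathcal{G}(\Phi_1^{\otimes n})$. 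Transitivity of $\mathcal{G}(\Phi_1^{\otimes n})$ and of $\mathcal{G}(\Phi_2^{\otimes n})$ follows from $\bar{C}_0(\Phi_i^{\otimes n})=0$ via Lemma \ref{trans-l}, so Lemma \ref{b-c} gives transitivity of $\mathcal{G}(\Phi_1^{\otimes n})\otimes\mathcal{G}(\Phi_2^{\otimes n})=\mathcal{G}(\Phi_1^{\otimes n}\otimes\Phi_2^{\otimes n})$; a second appeal to Lemma \ref{trans-l} yields $\bar{C}_0(\Phi_1^{\otimes n}\otimes\Phi_2^{\otimes n})=0$, completing the argument.
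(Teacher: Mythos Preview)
Your argument is correct, but it works harder than necessary. The paper records the corollary as a direct consequence of Proposition \ref{non-sa}, and the intended reasoning simply \emph{iterates} that proposition: given $C_0(\Phi_1)=C_0(\Phi_2)=0$, one has $\bar C_0(\Phi_2^{\otimes n})=0$ and $\bar C_0(\Phi_1)=0$; applying Proposition \ref{non-sa} with the single channel $\Phi_1$ (which satisfies one of A--F by hypothesis) against the ``arbitrary'' channel $\Phi_2^{\otimes n}$ gives $\bar C_0(\Phi_1\otimes\Phi_2^{\otimes n})=0$; applying it again against $\Phi_1\otimes\Phi_2^{\otimes n}$ gives $\bar C_0(\Phi_1^{\otimes 2}\otimes\Phi_2^{\otimes n})=0$; and after $n$ steps $\bar C_0(\Phi_1^{\otimes n}\otimes\Phi_2^{\otimes n})=0$. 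This treats Proposition \ref{non-sa} as a black box and never asks whether $\Phi_1^{\otimes n}$ inherits any of the conditions A--F. Your route, by contrast, applies Proposition \ref{non-sa} once to the pair $(\Phi_1^{\otimes n},\Phi_2^{\otimes n})$, which forces you to check heredity of conditions C--F under tensor powers and, for A and B, to reopen the proof and carry the rank-one-span property through. Both approaches are valid; the iterative one is shorter and explains why the authors call the corollary a direct consequence.
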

\medskip

\begin{remark}\label{non-sa-r}
The question about validity of the assertions of Proposition
\ref{non-sa} and Corollaries \ref{non-sa-c-1}-\ref{non-sa-c-2} for
arbitrary infinite-dimensional entanglement-breaking channel
$\Phi_1$ remains open, since the existence of countably
nondecomposable separable states in an infinite-dimensional
bipartite quantum system implies the existence of
entanglement-breaking channels which don't have Kraus representation
(\ref{Kraus-rep}) with 1-rank operators $V_k$ \cite{HSW}.
\end{remark}\medskip

\begin{property}\label{non-sa+} \emph{Superactivation (\ref{sa-qc}) of one-shot zero-error quantum capacity does not
hold for two channels
$\,\Phi_i:\T(\H_{A_i})\rightarrow\T(\H_{B_i})$, $i=1,2$  if one of
the following conditions holds (in which
$\,\mathcal{G}(\Phi_i)\doteq\widehat{\Phi}^*_i(\B(\H_{E_i}))$ is the
noncommutative graph of $\,\Phi_i$):}
\begin{enumerate}[A)]
    \item \emph{$\mathcal{G}(\Phi_1)$ contains a maximal commutative $*$-subalgebra of
    $\,\mathfrak{M}_{n_1}$,  where $n_1=\dim\H_{A_1}<+\infty$, and $\,\Phi_2$ is an arbitrary channel;}
    \item \emph{$\dim\H_{A_1}=2$ (in particular, when $\Phi_1$ is a qubit channel) and $\,\Phi_2$ is an arbitrary channel};
    \item \emph{$\mathcal{G}(\Phi_1)$ and $\,\mathcal{G}(\Phi_2)$ are algebras;}
    \item \emph{$\Phi_1$ and $\,\Phi_2$ are Bosonic Gaussian channels (described in Example 1).}
\end{enumerate}
\end{property}

\begin{proof} A) Since a  maximal commutative $*$-subalgebra of
    $\,\mathfrak{M}_{n_1}$ consists of all matrices which are diagonal with respect to some orthonormal basis, the noncommutative graph $\mathcal{G}(\Phi_1\otimes\Phi_2)$
contains the subspace of all block-diagonal  matrices of the form
$\mathrm{diag}(a_1A, ..., a_{n_1}A)$, where $a_1, \ldots, a_{n_1}\in\mathbb{C}$
and $A \in\mathcal{G}(\Phi_2)$. So, the assumption
$\bar{Q}_0(\Phi_1\otimes\Phi_2)>0$ implies, by Lemma \ref{trans-l+},
the existence of unit vectors $|\varphi\rangle=(x_1, ... , x_{n_1})$ and
$|\psi\rangle=(y_1, ... , y_{n_1}),$ where $x_k,y_k\in\H_{A_2}$, such
that
$$
\sum_{k=1}^{n_1} a_k\langle y_k|A|x_k\rangle=0\quad \textrm{and}
\quad \sum_{k=1}^{n_1} a_k\langle x_k|A|x_k\rangle=\sum_{k=1}^{n_1}
a_k \langle y_k|A|y_k\rangle
$$
for all $a_1, \ldots, a_{n_1}\in\mathbb{C}$ and all $A
\in\mathcal{G}(\Phi_2)$. It follows  that
\begin{equation}\label{tmp-rel}
\langle y_k|A|x_k\rangle=0\quad \textrm{and} \quad \langle
x_k|A|x_k\rangle=\langle y_k|A|y_k\rangle
\end{equation}
for all $k$ and all $A \in\mathcal{G}(\Phi_2)$. Since
$\mathcal{G}(\Phi_2)$ contains the identity operator,
(\ref{tmp-rel}) shows  that $\|x_{k}\|=\|y_{k}\|$ for all $k$ and
hence there exists $k_0$ such that $\|x_{k_0}\|=\|y_{k_0}\|\neq0$.
Thus, (\ref{tmp-rel}) with $k=k_0$ implies, by Lemma \ref{trans-l+},
that $\bar{Q}_0(\Phi_2)>0$.\medskip

B) follows from assertion A, since the noncommutative graph
$\mathcal{G}(\Phi_1)$ of any non-reversible channel $\Phi_1$ with
$\dim\H_{A_1}=2$ contains a maximal commutative $*$-subalgebra of
$\,\mathfrak{M}_{2}$. Indeed, since $\mathcal{G}(\Phi_1)$ contains
an operator $T\neq\lambda I_2$, it contains a self-adjoint operator
$T'\neq\lambda I_2$ which is diagonal in a particular basis. The
operators $T'$ and $I_2$ generate a maximal commutative
$*$-subalgebra of $\,\mathfrak{M}_{2}$ contained in
$\mathcal{G}(\Phi_1)$.\medskip

C) follows from Proposition \ref{com}, since
$$
[\mathcal{G}(\Phi_1\otimes\Phi_2)]'
=[\mathcal{G}(\Phi_1)]'\,\bar{\otimes}\,[\mathcal{G}(\Phi_2)]',
$$
where $\bar{\otimes}$ denotes a tensor product of von Neumann
algebras \cite[Ch.10]{K&R}.

D) This assertion follows from the previous one, since the
noncommutative graph of a Bosonic Gaussian channel is an algebra
(see Example \ref{gauss}).
\end{proof}

Proposition \ref{non-sa+} and its proof imply the following two
observations.\smallskip

\begin{corollary}\label{non-sa+c-1} \emph{If a quantum channel $\,\Phi$ satisfies one of  conditions
A-D from Proposition \ref{non-sa+} then $\,Q_0(\Phi)=0$ if and only
if $\,\bar{Q}_0(\Phi)=0$.}
\end{corollary}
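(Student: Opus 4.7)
The \emph{only if} direction is immediate from the regularization formula $Q_0(\Phi)=\sup_n n^{-1}\bar{Q}_0(\Phi^{\otimes n})$: taking $n=1$ gives $\bar{Q}_0(\Phi)\leq Q_0(\Phi)$, so $Q_0(\Phi)=0$ forces $\bar{Q}_0(\Phi)=0$. The content of the corollary lies in the converse, so the plan is to assume $\bar{Q}_0(\Phi)=0$ under one of conditions A-D and prove $\bar{Q}_0(\Phi^{\otimes n})=0$ for every $n\geq 1$, which immediately yields $Q_0(\Phi)=0$.

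The proof is by induction on $n$. The base case $n=1$ is the hypothesis. For the inductive step, assume $\bar{Q}_0(\Phi^{\otimes(n-1)})=0$, write $\Phi^{\otimes n}=\Phi\otimes\Phi^{\otimes(n-1)}$, and apply Proposition \ref{non-sa+} to the pair $(\Phi_1,\Phi_2)=(\Phi,\Phi^{\otimes(n-1)})$. To close the induction I must check that the relevant condition A-D is satisfied by this pair:

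In cases A and B the condition only constrains $\Phi_1$ and leaves $\Phi_2$ arbitrary, so the pair is automatically admissible. In case C both graphs must be algebras; since $\mathcal{G}(\Phi)$ is an algebra by hypothesis, it remains to argue that $\mathcal{G}(\Phi^{\otimes(n-1)})$ is one as well. This follows from the identity $\mathcal{G}(\Phi_1\otimes\Phi_2)=\mathcal{G}(\Phi_1)\,\bar{\otimes}\,\mathcal{G}(\Phi_2)$ (dual to the commutant relation $[\mathcal{G}(\Phi_1\otimes\Phi_2)]'=[\mathcal{G}(\Phi_1)]'\,\bar{\otimes}\,[\mathcal{G}(\Phi_2)]'$ used in the proof of Proposition \ref{non-sa+}(C)), together with the standard fact that the spatial tensor product of von Neumann algebras is a von Neumann algebra; iterating gives the claim for $\Phi^{\otimes(n-1)}$. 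In case D the stability under tensor products is recorded in Example \ref{gauss}: $\Phi_{K_1,l_1,\alpha_1}\otimes\Phi_{K_2,l_2,\alpha_2}$ is the Gaussian channel with parameter $\alpha_1\oplus\alpha_2$, so $\Phi^{\otimes(n-1)}$ is again Bosonic Gaussian.

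The main obstacle is entirely this bookkeeping: one must verify that each of the four conditions is preserved under passing from $\Phi_2$ to $\Phi^{\otimes(n-1)}$ so that Proposition \ref{non-sa+} can be invoked at the inductive step. Conditions A and B are trivially preserved because the hypothesis on $\Phi_2$ is vacuous, and conditions C and D are preserved for structural reasons recalled above. Once this is in place, Proposition \ref{non-sa+} delivers $\bar{Q}_0(\Phi\otimes\Phi^{\otimes(n-1)})=0$ and the induction is complete.
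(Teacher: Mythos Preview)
Your proof is correct and is exactly the argument the paper has in mind when it says that the corollary follows from ``Proposition \ref{non-sa+} and its proof'': an induction on $n$ applying Proposition \ref{non-sa+} to the pair $(\Phi,\Phi^{\otimes(n-1)})$, with the bookkeeping that conditions A--B impose nothing on $\Phi_2$, while conditions C and D are stable under tensor powers (the latter via Example \ref{gauss}, the former via the commutant identity displayed in the proof of Proposition \ref{non-sa+}(C), which upon taking commutants again shows that $\mathcal{G}(\Phi^{\otimes k})$ is an algebra whenever $\mathcal{G}(\Phi)$ is). The paper leaves all of this implicit; you have simply made it explicit.
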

\medskip

\begin{corollary}\label{non-sa+c-2}
\emph{Superactivation of asymptotic quantum zero-error capacity
(property (\ref{sa-qc}) with $\bar{Q}_0$ replaced by $Q_0$) does not
hold for channels $\,\Phi_1$ and $\,\Phi_2$ satisfying one of
conditions A-D from Proposition \ref{non-sa+}.}
\end{corollary}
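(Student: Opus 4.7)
The plan is to reduce the asymptotic statement to the one-shot Proposition \ref{non-sa+} applied to tensor powers of the channels. Recall that $Q_0(\Phi)=\sup_n n^{-1}\bar{Q}_0(\Phi^{\otimes n})$, so $Q_0(\Phi_i)=0$ is equivalent to $\bar{Q}_0(\Phi_i^{\otimes n})=0$ for every $n\geq 1$, while $(\Phi_1\otimes \Phi_2)^{\otimes n}$ is unitarily equivalent to $\Phi_1^{\otimes n}\otimes \Phi_2^{\otimes n}$ by a permutation of tensor factors. Hence it suffices to verify, for each $n\geq 1$, that the pair $(\Phi_1^{\otimes n},\Phi_2^{\otimes n})$ again satisfies one of conditions A--D of Proposition \ref{non-sa+}; the proposition then yields $\bar{Q}_0(\Phi_1^{\otimes n}\otimes \Phi_2^{\otimes n})=0$ for all $n$, which is exactly $Q_0(\Phi_1\otimes \Phi_2)=0$.

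The underlying observation is that $\mathcal{G}(\Phi^{\otimes n})=\mathcal{G}(\Phi)^{\otimes n}$, which follows from the Kraus-product description of $\Phi^{\otimes n}$ together with the expression of the noncommutative graph as the span of the products $\{V_k^*V_l\}$. Given this, stability of conditions A, C, D under tensor powers is straightforward: for (A), if $\mathcal{D}\subset\mathcal{G}(\Phi_1)$ is a maximal commutative $*$-subalgebra of $\mathfrak{M}_{n_1}$, then $\mathcal{D}^{\otimes n}$ is such a subalgebra of $\mathfrak{M}_{n_1^n}$ contained in $\mathcal{G}(\Phi_1^{\otimes n})$; for (C), spatial tensor products of $*$-algebras are $*$-algebras, so the $\mathcal{G}(\Phi_i^{\otimes n})$ remain algebras; for (D), the tensor product of Bosonic Gaussian channels is Gaussian (with covariance $\alpha_1\oplus\alpha_2$, extended inductively), as already noted at the end of Example \ref{gauss}. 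Applying Proposition \ref{non-sa+} to $(\Phi_1^{\otimes n},\Phi_2^{\otimes n})$ gives the desired vanishing in each of these cases.

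The main obstacle is case (B), because the input dimension of $\Phi_1^{\otimes n}$ is $2^n$ rather than $2$. The resolution is to notice that the proof of (B) in Proposition \ref{non-sa+} actually establishes (A) for qubit channels: from $\bar{Q}_0(\Phi_1)=0$ (which is implied by $Q_0(\Phi_1)=0$ and forces $\mathcal{G}(\Phi_1)\neq\mathbb{C}I_2$, since $\mathcal{G}(\Phi_1)=\mathbb{C}I_2$ would make $\Phi_1$ completely reversible on the qubit) that proof produces a noncentral self-adjoint element of $\mathcal{G}(\Phi_1)$, which together with $I_2$ generates a maximal commutative $*$-subalgebra $\mathcal{D}\subset\mathcal{G}(\Phi_1)$. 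Its $n$-fold tensor power is a maximal commutative $*$-subalgebra of $\mathfrak{M}_{2^n}$ contained in $\mathcal{G}(\Phi_1^{\otimes n})$, so the pair $(\Phi_1^{\otimes n},\Phi_2^{\otimes n})$ falls under case (A) of Proposition \ref{non-sa+} for every $n$, completing the argument.
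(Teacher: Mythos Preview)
Your argument is correct and is precisely the unspelled-out reasoning the paper points to with the phrase ``Proposition \ref{non-sa+} and its proof imply\ldots'': one checks that each of conditions A--D is stable under tensor powers (with B reducing to A via the argument in the proof of Proposition \ref{non-sa+}), and then applies the one-shot proposition to $(\Phi_1^{\otimes n},\Phi_2^{\otimes n})$ for every $n$. Your handling of the edge case $\mathcal{G}(\Phi_1)=\mathbb{C}I_2$ in (B) is also in line with the paper's remark that only non-reversible qubit channels need be considered.
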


\section{Relations to reversibility  properties of a channel}

\subsection{Reversibility of a single channel and one-shot zero-error
capacities}

Reversibility (sufficiency) of a quantum channel
$\Phi:\mathfrak{T}(\mathcal{H}_A)\rightarrow\mathfrak{T}(\mathcal{H}_B)$
with respect to a family $\S$ of states in
$\mathfrak{S}(\mathcal{H}_A)$ means the existence of a quantum
channel
$\Psi:\mathfrak{T}(\mathcal{H}_B)\rightarrow\mathfrak{T}(\mathcal{H}_A)$
such that $\Psi(\Phi(\rho))=\rho$ for all $\rho\in\S$
\cite{J&P,J-rev}.

The notion of reversibility of a channel naturally arises in
analysis of different general questions of quantum information
theory, in particular, of conditions for preserving entropic
characteristics of quantum states under the action of a channel. In
particular, it follows from Petz's theorem that the Holevo
quantity\footnote{The Holevo quantity provides an upper bound for
accessible classical information which can be obtained by applying a
quantum measurement \cite{H-SCI,N&Ch}.} of an ensemble
$\{\pi_i,\rho_i\}$ of quantum states is preserved under the action
of a quantum channel $\Phi$, i.e.
$$
\chi(\{\pi_i,\Phi(\rho_i)\})=\chi(\{\pi_i,\rho_i\}),
$$
if and only if the channel $\Phi$ is reversible with respect to the
family $\{\rho_i\}$ \cite{J&P}.\smallskip

A general criterion for reversibility of a quantum channel (in the
 von Neumann algebras theory settings) is obtained in
\cite{J&P}. Several conditions for reversibility expressed in terms
of a complementary channel are derived from this criterion in
\cite{BRC}, where a complete characterization of reversibility with
respect to families of pure states is given. The case of families of
pure states is of special interest in quantum information theory,
since many capacity-like characteristics of a quantum channel can be
determined as extremal values of functionals depending on ensembles
of pure states \cite{H-SCI, N&Ch}.\smallskip

To describe reversibility properties of a channel $\Phi$ the
\emph{reversibility index}
$$\ri(\Phi)=[\,\ri_1(\Phi),\ri_2(\Phi)\,]$$ is introduced in
\cite{BRC}, in which the components $\ri_1(\Phi)$ and $\ri_2(\Phi)$
take the values $0,1,2$. The first component $\ri_1(\Phi)$
characterizes reversibility of the channel $\Phi$ with respect to
(w.r.t.) complete\footnote{A family
$\{|\varphi_{\lambda}\rangle\langle\varphi_{\lambda}|\}_{\lambda\in\Lambda}$
of pure states in $\S(\H)$ is called complete if  the linear hull of
the family $\{|\varphi_{\lambda}\rangle\}_{\lambda\in\Lambda}$ is
dense in $\H$.} families $\S$ of pure states as follows
\begin{description}
    \item [$\ri_1(\Phi)=0\,$] if $\,\Phi\,$  is not reversible w.r.t. any
    complete family $\S$;
    \item [$\ri_1(\Phi)=1\,$] if $\,\Phi\,$  is reversible w.r.t. a complete orthogonal family $\S$ but it is not reversible w.r.t. any
    complete nonorthogonal family $\S$;
    \item [$\ri_1(\Phi)=2\,$] if $\,\Phi\,$ is reversible w.r.t. a complete nonorthogonal family
    $\S$.
\end{description}

The second component $\ri_2(\Phi)$ characterizes reversibility of
the channel  $\Phi$ with respect to noncomplete families of pure
states and is defined similarly to $\ri_1(\Phi)$ with the term
"complete" replaced by "noncomplete".\smallskip

So that $\,\ri(\Phi)=01\,$ means that the channel $\Phi$  is not
reversible with respect to any family of pure states which is either
complete or nonorthogonal, but it is reversible with respect to some
noncomplete orthogonal family.\medskip

A channel $\Phi$  with given $\,\ri(\Phi)\,$ can be characterized by
properties of the set $\ker\widehat{\Phi}$ \cite[Corollary 2]{BRC}.
This characterization and  Lemmas \ref{rank1},\ref{trans-l+} show
that
$$
\ri_2(\Phi)=0 \;\Leftrightarrow\; \bar{C}_0(\Phi)=0,\qquad
\ri_2(\Phi)=2 \;\Leftrightarrow\; \bar{Q}_0(\Phi)>0,
$$
while $\ri_2(\Phi)=1$ means that $\bar{C}_0(\Phi)>0$ but
$\bar{Q}_0(\Phi)=0$.

\subsection{On reversibility of a tensor product channel}

Let $\,\Phi:\T(\H_A)\rightarrow\T(\H_B)$ and
$\,\Psi:\T(\H_C)\rightarrow\T(\H_D)$ be arbitrary quantum channels.
It is easy to see that reversibility of the channels $\Phi$ and
$\Psi$ with respect to particular families $\S_{\Phi}$ and
$\S_{\Psi}$ imply reversibility of the channel $\Phi\otimes\Psi$
with respect to the family
$\S_{\Phi}\otimes\S_{\Psi}=\{\rho\otimes\sigma\,|\,\rho\in\S_{\Phi},
\sigma\in\S_{\Phi}\}$. It follows that
\begin{equation}\label{ineq-1}
    \ri_1(\Phi\otimes\Psi)\geq\min\{\ri_1(\Phi),\ri_1(\Psi)\}
\end{equation}
and
\begin{equation}\label{ineq-2}
    \ri_2(\Phi\otimes\Psi)\geq\max\{\ri_2(\Phi),\ri_2(\Psi)\}.
\end{equation}

An interesting question concerns the possibility of a strict
inequality in (\ref{ineq-1}) and in (\ref{ineq-2}). This question is
nontrivial, since the channel $\Phi\otimes\Psi$ may be reversible
with respect to families consisting of \emph{entangled} pure states
in $\S(\H_A\otimes\H_C)$ (and the corresponding reversing channel
may not be of the tensor product form).
\smallskip

As to inequality (\ref{ineq-1}) this question has a simple
solution.\smallskip
\begin{property}\label{eq-1}
\emph{An equality holds in (\ref{ineq-1}) for any channels $\,\Phi$
and $\,\Psi$.}
\end{property}

\begin{proof} This follows from Corollary 2 in \cite{BRC}, since it
is easy to show that $\widehat{\Phi}\otimes\widehat{\Psi}$ is a
discrete c-q channel if and \emph{only if} $\widehat{\Phi}$ and
$\widehat{\Psi}$ are discrete c-q channels.\footnote{A channel
$\Phi:\T(\H_A)\rightarrow\T(\H_B)$ is called discrete
classical-quantum (discrete c-q)  if it has the representation
$\Phi(\rho)=\sum_{i=1}^{\dim\H_A}\langle i|\rho|i\rangle\sigma_i,$
where $\{|i\rangle\}$ is an orthonormal basis in $\H_A$ and
$\{\sigma_i\}$ is a collection of states in $\S(\H_B)$
\cite{H-SCI}.} \end{proof}

By the remark at the end of Section 5.1 the validity of a strict
inequality in (\ref{ineq-2}) means a particular form of
superactivation of one-shot zero-error capacities. For example, the
superactivation of one-shot zero-error classical capacity is
equivalent to the existence of two channels $\Phi_1$ and $\Phi_2$
such that
$$
\ri_2(\Phi_1)=\ri_1(\Phi_2)=0,\quad \textup{but}\quad
\ri_2(\Phi_1\otimes\Phi_2)=1,
$$
while the extreme form of superactivation means the existence of two
channels $\Phi_1$ and $\Phi_2$ such that
$$
\ri_2(\Phi_1)=\ri_2(\Phi_2)=0,\quad \textup{but}\quad
\ri_2(\Phi_1\otimes\Phi_2)=2.
$$
These effects can be also called \emph{superactivation of
reversibility} of a channel.\smallskip

So, we see that reversibility of a channel with respect to
\emph{noncomplete} families of pure states can be superactivated by
tensor products in contrast to reversibility with respect to
\emph{complete} families of pure states (this follows from
Proposition \ref{eq-1}).\smallskip

Proposition \ref{non-sa} shows that
$$
\ri_2(\Phi_1)=\ri_2(\Phi_2)=0\quad \Rightarrow\quad
\ri_2(\Phi_1\otimes\Phi_2)=0
$$
for any channel $\Phi_1$ satisfying one of the conditions of this
proposition and arbitrary channel $\Phi_2$.

Proposition \ref{non-sa+} shows that
$$
\max\{\ri_2(\Phi_1),\ri_2(\Phi_2)\}<2\quad \Rightarrow\quad
\ri_2(\Phi_1\otimes\Phi_2)<2
$$
for any channels $\Phi_1$ and $\Phi_2$ satisfying one of the
conditions of this proposition.\bigskip

We are grateful to A.S.~Holevo and to the participants of his
seminar "Quantum probability, statistic, information" (the Steklov
Mathematical Institute) for useful discussion. We are also grateful
to R.~Duan for comments concerning minimal dimension of channels
demonstrating the superactivation of one-shot zero-error classical
capacity. We would like to thank V.S.~Shulman and P.B.M.~Sorensen
for helping with some particular questions. We are grateful to Dan
Stahlke for pointing a mistake in the previous version of the paper and
to the unknown referee for valuable suggestions.
\bigskip

The work of the first-named author is partially supported the
fundamental research programs of the Russian Academy of Sciences and
by the RFBR grant 13-01-00295a. Research of the second-named author
is  funded by the Polish National Science Centre grant under the
contract number DEC-2012/06/A/ST1/00256.

\end{document}